\newcommand{\Agents}{{\mathcal N}}
\newcommand{\Candidates}{{\mathcal C}}
\newcommand{\profile}{\succ}
\newcommand{\States}{S}   
\newcommand{\imp}{\rightarrow}
\newcommand{\Imp}{\Rightarrow}
\newcommand{\Pmi}{\Leftarrow}
\renewcommand{\max}{\mathsf{max}}
\renewcommand{\min}{\mathsf{min}}
\newcommand{\eq}{\leftrightarrow}
\newcommand{\et}{\wedge}
\newcommand{\vel}{\vee}
\newcommand{\Et}{\bigwedge}
\newcommand{\Vel}{\bigvee}
\newcommand{\Union}{\bigcup}
\newcommand{\weg}[1]{}
\newcommand{\powerset}{{\mathcal P}}
\newcommand{\pmodel}{{\mathcal M}}
\newcommand{\Inter}{\bigcap}
\newcommand{\vote}{\succ}
\newtheorem{theorem}{Theorem}
\newtheorem{definition}[theorem]{Definition}
\newtheorem{example}[theorem]{Example}
\newtheorem{proposition}[theorem]{Proposition}
\newcommand{\lang}{{\mathcal L}}
\renewcommand{\phi}{\varphi}
\newcommand{\cvote}{[\succ]}
\begin{document}

\title{How does uncertainty about other voters determine a strategic vote?}

\author{
Zeinab Bakhtiari\thanks{LORIA, CNRS --- Universit\'e de Lorraine, {\tt bakhtiarizeinab@gmail.com}}
\and 
Hans van Ditmarsch\thanks{LORIA, CNRS --- Universit\'e de Lorraine, {\tt hans.van-ditmarsch@loria.fr}}
\and 
Abdallah Saffidine\thanks{Computer Science \& Engineering, University of New South Wales, {\tt abdallah.saffidine@gmail.com}}
}

\date{}

\maketitle

\begin{abstract} 
We propose a framework for strategic voting when a voter may lack knowledge about  the preferences of other voters, or about other voters' knowledge about her own 
preference. In this setting we define notions of manipulation, equilibrium, and dominance, under uncertainty. We provide scenarios wherein the profiles of true preferences are the same but the equilibrium profiles are different, because the voters have different knowledge about other voters. We also model actions that change such uncertainty about preferences, such as a voter revealing her preference. We show that some forms of manipulation and equilibrium are preserved under such uncertainty updates and others not. We then formalize epistemic voting terminology in a logic. 
Our aim is to provide the epistemic background for the analysis and design of voting rules that incorporate uncertainty.
\end{abstract}

{\bf keywords} \quad social choice, voting, epistemic logic, dynamics

\section{Introduction}

A well-known fact in social choice theory is that strategic voting, also known as manipulation, becomes harder when voters know less about the preferences of other voters. Standard approaches to manipulation in social choice theory \citep{Gibbard73,Satterthwaite75} as well as in computational social choice \citep{BartholdiToveyTrick89b} assume that the manipulating voter knows the sincere preferences of other voters. Some approaches \citep{DugganSchwartz00,BarberaBogomolnaiaStel98} assume that voters have a probabilistic prior belief on the outcome of the vote, which encompasses the case where each voter has a probability distribution over the set of profiles. In yet other approaches the uncertainty of the manipulator is modelled as the inability to distinguish between a set of voting profiles \citep{conitzeretal.aaai:2011,meir:2015}. In \citet{conitzeretal.aaai:2011} manipulators have incomplete knowledge of the non-manipulators' preferences. \citet{meir:2015} is a setting for iterated voting, wherein incomplete knowledge of the profile at the next iteration is induced by the partial view on the profile at the current iteration. Still, we think that the study of strategic voting under complex belief states has received little attention so far, especially when voters are uncertain about the uncertainties of other voters, i.e., when we model higher-order beliefs of voters.


In this contribution we present a logic to model higher-order uncertainty of voters. On the assumption that voters may be uncertain about other voters' preferences but know their own preference, we model how this uncertainty may determine a strategic vote, and how a reduction in this uncertainty may change a strategic vote. We give scenarios where the profile is the same, and even the set of profiles about which the manipulator is uncertain is the same, but where the uncertainty about other voters is different, thus resulting in different manipulations. Additionally, reducing such uncertainty may affect manipulative behaviour, and we also give example scenarios for that. We model uncertainty reducing actions as truthful public announcements \citep{plaza:1989}.

There are several ways of expressing incomplete knowledge about the linear order of preference of a voter. The literature on possible and necessary winners assumes that it is expressed by a collection of partial strict orders (one for each voter), while \citet{HAKW08} consider it to consist of a collection of probability distributions, or a collection of sets of linear orders (one for each voter), i.e., a collection of profiles. For partial preference see also \citet{KonczakLang05} and more generally for voting under incomplete knowledge  see \citet{BoutilierRosenscheinChapter}.

\medskip

A first link between epistemic logic and voting, to our knowledge, has been given in \citet{chopraetal:2004}---they use knowledge graphs to indicate that a voter is uncertain about the preferences of another voter. A more recent follow-up of that, within the area known as social software, is \citet{parikhetal:2013}. 

An independent line of modal logics for social choice, of which voting can be seen as a special case, was proposed in \cite{agotnesetal:2006} and in the journal follow-up \cite{agotnesetalJAAMAS:2011}. They consider two modalities of which one formalizes what is true for all profiles (in the current agenda). It is clearly similar to our epistemic modality formalizing uncertainty about (voting) profiles. The quantification in \cite{agotnesetalJAAMAS:2011} is game theoretical rather than epistemic as in \citet{chopraetal:2004} and in our proposal.

Modal logics of social choice and of voting have further been proposed in \cite{troquardetal:2011} and, building on that, in \cite{cinaetal:2016,perkov:2016}. Also in these logics the semantic primitives are preferences of agents or voters, and sets of those, i.e., profiles. But the modalities encode agency, and not uncertainty, as we do. Compared to their results the logical equivalents of our voting primitives are very `flat': essentially we encode them as big disjunctions of profiles, for example, we represent the proposition that $a$ is the winner of the election by the (very large) disjunction of all profiles wherein this is the case, given that the voting function $F$ is a parameter of the logic (and not of the model, as in \cite{cinaetal:2016}). In our case, all the logical action goes into the uncertainty about profiles, and the modelling results involve the formalization of epistemic notions such as conditional (i.e., Bayesian) equilibrium, and their (epistemic) updates. 

The uncertainty of a manipulating voter in the mentioned \citet{conitzeretal.aaai:2011} and \citet{meir:2015} is modelled in information sets (i.e., set of indistinguishable profiles). However, from that voter's perspective the other voters are not uncertain, so that higher-order uncertainty is not considered. The goals of \citet{meir:2015} are similar to ours (when do equilibria exist, assume risk aversion) but his methods are statistical (there is no higher-order uncertainty). 

Our setting shares also some similarity with robust mechanism design \citep{bergemannetal:2005}, which generalizes classical mechanism design by weakening the common knowledge assumptions of the environment among the players and the planner. In \citet{bergemannetal:2005} uncertainty is modelled with information partitions. The main technical difference is that in our setting, as in classical social choice theory, preferences are ordinal, whereas in (robust) mechanism design preferences are numerical payoffs, which allows for payments. 

\medskip

This is an overview of our contribution. Section \ref{sec.vote} presents voting terminology. Section \ref{sec.knpro} introduces knowledge profiles, our semantic primitive. Section \ref{sec.dom} investigates epistemic notions of dominance. Section \ref{sec.eqkn} defines equilibrium profiles under uncertainty, wherein voting is represented as a Bayesian game. 
Section \ref{sec.dyn} is devoted to uncertainty updates and how this affects knowledge of other voters and equilibrium profiles. Then, in Section \ref{sec.logic} we succinctly present the logic of this contribution more formally --- we do this at the end, because our main focus is the semantic interaction of knowledge and voting, not the logic. 

\section{Voting} \label{sec.vote}


Assume a finite set $\Agents = \{1,\dots,n\}$ of $n$ {\em voters} (or {\em agents}), and a finite set  $\Candidates = \{a,b,c,\dots \}$ of $m$ {\em candidates} (or {\em alternatives}). Voter variables are $i,j,\dots$, and candidate variables are $x,y, (x_1, x_2,) ...$. Let $O(\Candidates)$ be the set of linear orders on $\Candidates$. 
\begin{definition}[Preference, profile, voting rule, vote] For each voter $i$, a {\em preference (relation)} (over $\Candidates$) is a linear order on $\Candidates$. A {\em profile} $\profile$ is a function $\profile: \Agents \imp O(\Candidates)$ assigning a preference $\vote\!\!(i)$, denoted as $\vote_i$, to each voter. A {\em (resolute) voting rule} is a function $F: O(\Candidates)^\Agents \imp \Candidates$ from the set of  profiles for $\Agents$ to the set of candidates. From the perspective of the voting rule the enacted preference $\vote_i$ is a {\em vote}. \end{definition} 
If voter $i$ prefers candidate $a$ to candidate $b$ in preference $\vote_i$, we write $a \succ_i b$, or $b \prec_i a$. For ($a=b$ or $a \succ_i b$) we write $a \succeq_i b$, or $b \preceq_i a$. 
Profile variables are $\profile, \profile', \profile'', ...$. If ${\profile} \in O(\Candidates)^\Agents$ and ${\vote'_i} \in O(\Candidates)$, then $(\profile_{-i},\vote'_i)$ is the profile wherein $\vote_i$ is substituted by $\vote'_i$ in $\profile$.

The voting rule determines which candidate wins the election --- $F(\profile)$ is the {\em winner}.  In case there is more than one {\em tied co-winner} we assume a {\em tie-breaking preference}, that is a linear order over candidates. In the {\em plurality voting rule} the winner is the candidate who is most often ranked the top candidate (most preferred in $\succ_i$), where in case there are several co-winners the tie-breaking preference selects one.

Voters cannot be assumed to vote according to their preference. Instead of giving her {\em sincere} or {\em truthful preference}, a voter may cast another preference as her vote. This is an {\em insincere} or {\em strategic preference}. If that vote improves the outcome it is a {\em manipulation}.

\begin{definition}[Manipulation] \label{def.success} Let $i \in \Agents$, ${\profile} \in O(\Candidates)^\Agents$, ${\vote'_i} \in O(\Candidates)$, and $F$ a voting rule. If $F(\profile_{-i},\vote'_i)$ $\succ_i F(\profile)$, then $\vote'_i$ is a {\em manipulation} by voter $i$ of profile $\profile$.
\end{definition}

The combination of a profile $\profile$ and a voting rule $F$ defines a strategic game: a player is a voter, an individual strategy for a player is a preference, a strategy profile (of players) is therefore a profile in our defined sense (of voters), and the preference of a player among the outcomes is according to her sincere vote: given profiles $\profile'$, $\profile''$, voter $i$ also prefers outcome $F(\profile')$ over outcome $F(\profile'')$ in the game theoretical sense iff (in the voting sense) $F(\profile') \succ_i F(\profile'')$. The relevant notions of dominance and equilibrium are as follows.

\begin{definition}[Dominant preference] \label{def.dominantvote} Let $i \in \Agents$,   ${\profile} \in O(\Candidates)^\Agents$, ${\vote_i'} \in O(\Candidates)$, and $F$ a voting rule. If for all ${\profile''} \in O(\Candidates)^\Agents$, $F(\profile''_{-i},\vote'_i) \succeq_i F(\profile'')$, and for some ${\profile''} \in O(\Candidates)^\Agents$, $F(\profile''_{-i},\vote'_i) \succ_i F(\profile'')$, then $\vote'_i$ is a {\em dominant preference} for voter $i$.
\end{definition}
A dominant preference corresponds to a dominant strategy in game theory.\footnote{We called this (game theoretical) notion `dominant preference' and not `dominant manipulation', to avoid confusion with the (epistemic) notion of {\em dominant manipulation of an information set}, defined in Subsection \ref{sec.dominance}: only the latter is commonly used in voting theory, as the requirements of the former are very strong.} 
The reader will recognize this as {\em weak} dominance. We may also use {\em strong} dominance, which holds for $\succ_i'$ if for all ${\profile''} \in O(\Candidates)^\Agents$, $F(\profile''_{-i},\vote'_i) \succ_i F(\profile'')$. Note that a dominant preference need not be a manipulation of $\succ$: the strict part may be satisfied for another profile than the profile $\profile$ of true preferences. (Of course, a {\em strongly} dominant strategy is also a manipulation.)

\begin{definition}[Equilibrium profile] \label{def.eq}
A profile $\profile$ is an {\em equilibrium profile} iff no voter has a manipulation of $\profile$.
\end{definition} 
An equilibrium profile corresponds to a Nash equilibrium (in game theory). An equivalent way of defining equilibrium profile is: A profile $\profile$ is an equilibrium profile iff for all ${\vote'_i} \in O(\Candidates)$, $F(\profile) \succeq_i F(\profile_{-i},\vote'_i)$. The formulations correspond: 

Suppose the above condition does not hold. Then there is a voter $i$ and some preference ${\vote'_i} \in O(\Candidates)$ such that $F(\profile) \prec_i F(\profile_{-i},\vote'_i)$, i.e., $F(\profile_{-i},\vote'_i) \succ_i F(\profile)$, i.e., $\vote_i'$ is a manipulation for voter $i$ of profile $\profile$. In other words, if the condition holds, then no voter has a manipulation of $\profile$.

\section{Knowledge profiles} \label{sec.knpro}

We model uncertainty in voting as incomplete knowledge about the profile. The structures to represent such uncertainty are standard in modal logic \citep{faginetal:1995,hvdetal.del:2007}. To allow for the definition of dominance and of equilibria under uncertainty, we require that voters know their own preference.

\begin{definition}[Knowledge profile] 
A {\em profile model} is a structure $\pmodel = (S, \sim, \pi)$, where $S$ is a {\em domain} of abstract objects called states; where ${\sim} : \Agents \imp \powerset(S\times S)$ is a function such that for $i = 1,\dots,n$, ($\sim\!\!(i)$ written as) $\sim_i$ is an {\em indistinguishability relation} that is an equivalence relation; and where {\em valuation} $\pi: \States \imp O(\Candidates)^\Agents$ assigns a profile to each state and such that $s \sim_i t$ implies $\pi(s)_i = \pi(t)_i$. The {\em information set} of voter $i$ in state $s$ is defined as $[s]_{\sim_i} := \{ t \mid s \sim_i t \}$. Let $\pi([s]_{\sim_i})$ denote $\{ \pi(t) \mid s \sim_i t \}$. A {\em knowledge profile} is pointed structure $\pmodel_s$ where $s \in \States$. \end{definition} 
Unless confusion results, the set $\pi([s]_{\sim_i})$ of profiles that $i$ considers possible is also called an information set, as in the voting literature  \citep{conitzeretal.aaai:2011}. In general, different information sets (i.e., sets of states) may be about the same set of profiles.
\begin{definition}[Knowledge and ignorance]
Given is a knowledge profile $\pmodel_s$, where $\pmodel = (S, \sim, \pi)$. Let $\phi$ be a proposition about profiles. Voter $i$ {\em knows} $\phi$ in $\pmodel_s$, iff $\phi$ is true in all $t \in S$ such that $s \sim_i t$. Voter $i$ {\em considers possible that} (or {\em does not know that not}) $\phi$ in $\pmodel_s$, iff $\phi$ is true in some $t \in S$ such that $s \sim_i t$; if, in that case, there is an additional state $u \in S$ with $s \sim_i u$ in which $\phi$ is false, then we say that $i$ {\em does not know whether} (or {\em is uncertain about}, or {\em is ignorant about}) $\phi$.
\end{definition}
Section \ref{sec.logic} contains a proper inductive definition of `proposition about profiles', and a formal semantics.  For now it suffices to say that the following are propositions about profiles: $\profile$ or `the profile is $\profile$'; $a \succ_i b$, and $\vote_i$ (`the preference of voter $i$ is $\vote_i$'); and also propositions like `voter $i$ knows that voter $j$ knows $\vote_i$'. 
Example \ref{ex.one} demonstrates that: {\em different states may be assigned the same profile, but have different knowledge properties}. In scenarios where different states are always assigned different profiles, we can say that the uncertainty of a voter is (only) about a collection of profiles. But in scenarios where different states are assigned the same profile, the set $\pi([s]_{\sim_i})$ of profiles that voter $i$ considers possible is smaller than the set $[s]_{\sim_i}$ of states that $i$ considers possible.



Partial preferences cannot be expressed in our framework. In particular, uncertainty between $a \succ_i b \succ_i c$ and $b \succ_i a \succ_i c$ does not mean indifference between candidates $a$ and $b$. Uncertainty between $a \succ_i b \succ_i c$ and $b \succ_i a \succ_i c$ means that ($a \succ_i b \succ_i c$ or $b \succ_i a \succ_i c$) is true. This entails ($a \succ_i b$ or $b \succ_i a$), which is equivalent to (not ($b \succeq_i a$ and $a \succeq_i b$)). That is the opposite of indifference between $a$ and $b$, as that means ($b \succeq_i a$ and $a \succeq_i b$).

\begin{example} \label{ex.one}
Consider two voters who are Leela ($1$) and Sunil ($2$), children of Devi, and who `vote' for an animated movie to see before bedtime; where the choice is between $a$ (Alice in Wonderland), $b$ (Brave), and $c$ (Cars). Leela's preference is  $a \succ_1 b \succ_1 c$ and Sunil's preference is $c \succ_2 b \succ_2 a$.  Sunil (2) is uncertain if Alice is Leela's most or least preferred movie, and dually if Cars is Leela's least or most preferred movie. More interestingly, Leela (1) knows Sunil's preference, but she is uncertain whether Sunil knows her preference. Even more interestingly, Sunil is also uncertain whether, in case Alice is Leela's most preferred movie, she knows that he does not know that.

We model this as a knowledge profile $\pmodel_t$ consisting of three states $s,t,u$ and for two voters 1 and 2. State $s$ is assigned to profile $\profile$, wherein $a \succ_1 b \succ_1 c$ and $c \succ_2 b \succ_2 a$, etc. Below, a column represents a preference relation, and states that are indistinguishable for a voter $i$ are linked with an $i$-labelled edge. The partition for 1 on the domain is therefore $\{\{ s,t\}, \{u\} \}$, and the partition for 2 on the domain is $\{\{ s\}, \{t,u\}\}$.

\[\begin{array}{l}
\begin{array}{|c|c|} \hline 1 & 2 \\ \hline a & c \\ b & b \\ c & a \\ \hline \end{array}\mbox{------}1\mbox{------}\begin{array}{|c|c|} \hline 1 & 2 \\ \hline a & c \\ b & b \\ c & a \\ \hline \end{array}\mbox{------}2\mbox{------}\begin{array}{|c|c|} 
\hline 
1 & 2 \\ 
\hline 
c & c \\ 
b & b \\ 
a & a \\ 
\hline 
\end{array}
\\
\hspace{.0cm} s, \profile \hspace{2.2cm} t, \profile \hspace{2.2cm} u, \profile' \end{array}\]

States $s$ and $t$ have been assigned the same profile $\profile$ but have different epistemic properties. In $s$, 2 knows that 1 prefers $a$ over $c$, whereas in $t$, 2 does not know that. We list some relevant propositions that are true in the actual state $t$: 
\begin{itemize}
\item Leela prefers Alice over Cars. 
{\em This is true, because $a \succ_1 c$ in $t$.}
\item Sunil does not know that Leela prefers Alice over Cars. 
{\em This is true, because $t\sim_2u$, and $a \succ_1 c$ is false in $u$.}
\item Leela knows Sunil's preference, but she is uncertain whether Sunil knows her preference. {\em In $s$, Sunil knows that Leela's preference is $a \succ_1 b \succ_1 c$, whereas in $t$, Sunil does not know that Leela's preference is $a \succ_1 b \succ_1 c$, because $t\sim_2u$, and $c \succ_1 b \succ_1 a$ in $u$.}
\end{itemize}
\end{example}

Unlike merely sincere and insincere preference, in knowledge profiles there are {\em three} kinds of preference: actual sincere preference, possible sincere preference, and insincere preference. In $\pmodel_t$ of Example \ref{ex.one}, Leela's actual sincere preference is $a \succ_1 b \succ_1 c$, a possible sincere preference is $c \succ_1 b \succ_1 a$ (namely in state $u$, from the perspective of Sunil), and an insincere preference is $b \succ_1 a \succ_1 c$. This can be confusing.

\section{Manipulation, knowledge and dominance} \label{sec.dom}


\subsection{Manipulation and knowledge}

Given are a knowledge profile $\pmodel_s$ where $\pi(s) = {\profile}$, and a voting rule $F$. If voter $i$ can manipulate $\profile$, then voter $i$ can also manipulate $\pmodel_s$. This is because manipulation is defined with respect to the profile of the actual state of the knowledge profile (it is a game theoretical notion, not an epistemic notion). So it may be that a voter can manipulate the vote but does not know that, because she considers another profile possible wherein she cannot manipulate the vote. Notions of manipulation that involve knowledge can be borrowed from the knowledge and action literature \citep{jfak.bulletin:2001,Jamroga2004}. A curious situation is when in all states that the voter considers possible there is a manipulation, but when in different such states there are different manipulations. So she knows that she has a manipulation, but she does not know what the manipulation is. This is called {\em de dicto knowledge} of manipulation. A stronger form of knowledge is when there is a preference $\vote'_i$ that is {\em the same} manipulation in any state that the voter considers possible. This is called {\em de re knowledge} of manipulation.  
\begin{definition}[Knowledge of manipulation] \label{def.knowmanip} Given are a knowledge profile $\pmodel_s$ and a voting rule $F$.
\begin{itemize}
\item Voter $i$ knows {\em de dicto}  that she can manipulate $\pmodel_s$, if for all profiles  ${\profile}\in\pi([s]_{\sim_i})$ there is a preference $\vote'_i$ such that $\vote'_i$ is a manipulation in $\profile$.
\item Voter $i$ knows {\em de re} that she can manipulate $\pmodel_s$, if there is a preference $\vote'_i$ such that for all profiles  ${\profile}\in\pi([s]_{\sim_i})$, $\vote'_i$ is a manipulation in $\profile$.
\end{itemize}
\end{definition}
If voter $i$ knows {\em de re} that she can manipulate the election, she has the ability to manipulate, namely by strategically voting $\vote'_i$. But in {\em de dicto} manipulations the voter does not seem to have that ability. It is akin to `game of chicken' type equilibria in game theory, wherein for each strategy of a player there is a complementary strategy of the other player such that the pair is an equilibrium, but where this choice cannot be made without coordination. An example of de dicto knowledge of manipulation for Borda voting is given in \cite{hvdetal.TARKvote:2013}. 

Consider the profile model $\mathcal{H}$ consisting of the domain $O(\Candidates)^\Agents$, so we can identify states $s$ with their profiles ${\profile} = \pi(s)$,  and such that all voters only know their own preferences: ${\profile} \sim_i {\profile'}$ iff ${\vote_i} = {\vote'_i}$. In this (unique) model it is common knowledge that voters only know their own preferences. We can see it as an {\em interpreted system} \citep{faginetal:1995} consisting of {\em global states} that are profiles and where {\em local states} are individual preferences. A model such as $\mathcal{H}$ is known as a {\em hypercube} \citep{lomuscio:1999}.
\begin{proposition} \label{prop.impossible}
In the hypercube profile model, knowledge of manipulation is impossible for plurality voting.
\end{proposition}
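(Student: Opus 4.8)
The plan is to show that the (weaker) \emph{de dicto} form of knowledge of manipulation already fails at every state of the hypercube; since \emph{de re} knowledge of manipulation implies \emph{de dicto} knowledge of manipulation, this rules out both forms at once. Concretely, by Definition~\ref{def.knowmanip}, \emph{de dicto} knowledge of manipulation for voter $i$ at $\pmodel_s$ requires that \emph{every} profile in the information set $\pi([s]_{\sim_i})$ admit a manipulation for $i$. So it suffices to exhibit, inside each such information set, a single profile at which $i$ has no manipulation at all.

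First I would recall the shape of the information set in the hypercube $\mathcal{H}$: since $\profile \sim_i \profile'$ iff $\vote_i = \vote'_i$, the set $\pi([s]_{\sim_i})$ consists of \emph{all} completions of $i$'s own (fixed) preference $\vote_i$ by arbitrary preferences for the remaining voters; in particular $i$ considers possible every profile of the form $(\profile'_{-i},\vote_i)$. This freedom is exactly what lets me build the desired witness. Let $x$ be $i$'s top candidate, i.e.\ the maximal element of $\vote_i$, and let $\profile^\ast$ be the unanimous profile in which every voter---including $i$---votes $\vote_i$, so $\vote^\ast_j = \vote_i$ for all $j$. Since $\vote^\ast_i = \vote_i$, we have $\profile^\ast \in \pi([s]_{\sim_i})$.

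Next I would check that $i$ has no manipulation at $\profile^\ast$. Under plurality, $x$ is the top candidate of all $n$ voters, so $x$ receives every first-place vote and $F(\profile^\ast) = x$ with no tie to break. Because $x$ is $i$'s most preferred candidate, no candidate $y$ satisfies $y \succ_i x$; hence for every alternative vote $\vote''_i$ we have $F(\profile^\ast_{-i},\vote''_i) \preceq_i x = F(\profile^\ast)$, so no $\vote''_i$ is a manipulation of $\profile^\ast$ in the sense of Definition~\ref{def.success}. Thus $\profile^\ast$ is a profile in $i$'s information set admitting no manipulation, and \emph{de dicto} (and therefore \emph{de re}) knowledge of manipulation fails for $i$ at $\pmodel_s$. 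As $i$ and $s$ were arbitrary, knowledge of manipulation is impossible throughout $\mathcal{H}$.

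I do not expect a genuine obstacle here: the argument hinges on the single observation that a plurality winner, being determined solely by top candidates, can be made an uncontested winner by letting the other voters all top-rank $i$'s favourite, and that a voter can never strictly improve on her own most preferred outcome. The only points requiring care are (i) that the hypercube's information sets really contain all such completions, which is immediate from $\profile \sim_i \profile'$ iff $\vote_i = \vote'_i$, and (ii) the reduction of \emph{de re} to \emph{de dicto}, so that disproving the weaker notion settles both.
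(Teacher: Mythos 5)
Your proof is correct and takes essentially the same approach as the paper: both arguments hinge on the unanimous profile in $i$'s information set where every voter shares $i$'s preference $\vote_i$, at which $i$'s top candidate wins under plurality and hence no vote of $i$ can yield a strictly better outcome. The only cosmetic difference is organizational --- you refute \emph{de dicto} knowledge and invoke the implication from \emph{de re} to \emph{de dicto}, whereas the paper runs the same witness through the two cases separately.
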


\begin{proof}
The result holds for {\em de re} knowledge of manipulation and for {\em de dicto} knowledge of manipulation. We start with the {\em de re} case. 

Let $\profile$ be the profile. Let us assume that there are a sufficient number of voters and candidates to avoid boundary cases. Assume towards a contradiction that voter $i$ knows that $\vote'_i$ is a manipulation. As $i$ is uncertain about the preferences of other voters, she considers it possible all other voters $j$ have the same preferences as herself, i.e., she considers it possible that the profile is ${\profile''}\in O(\Candidates)^\Agents$ such that for all $j \in \Agents$, ${\vote_j''} = {\vote_i}$. In that case, $i$'s preferred candidate would have won by majority vote, contradicting the assumption that $\vote'_i$ is a manipulation. 

In the {\em de dicto} case, for each profile that voter $i$ considers possible there is a manipulation. For all those profiles, assuming that all other voters $j$ have the same preference as $i$, again derives a contradiction.
\end{proof}

\subsection{Dominant manipulation and knowledge} \label{sec.dominance}

We now compare the notions of manipulation and dominant strategy with the  notion of {\em dominant manipulation of an information set} in voting theory \citep{conitzeretal.aaai:2011}.
\begin{definition}[Dominant manipulation of an information set] 
Let a knowledge profile $\pmodel_s$ with $\pi(s) = {\profile}$, $i \in \Agents$, ${\vote_i'} \in O(\Candidates)$, and a voting rule $F$ be given. If for all ${\profile''} \in \pi([s]_{\sim_i})$, $F(\profile''_{-i},\vote'_i)$ $\succeq_i F(\profile'')$, and for some ${\profile''} \in \pi([s]_{\sim_i})$, $F(\profile''_{-i},\vote'_i) \succ_i F(\profile'')$, then $\vote'_i$ is a {\em dominant manipulation} for voter $i$ of information set $\pi([s]_{\sim_i})$ (or: of knowledge profile $\pmodel_s$).
\end{definition}
Observe that dominant manipulation of an information set  according to \cite{conitzeretal.aaai:2011} is on the assumption that all other voters vote according to their true preference, whereas dominant preference in the game theoretical sense (Def.\ \ref{def.dominantvote}) is on the assumption that all other voters can choose any preference as their vote. The first is dominance {\bf no matter the true preference} of others (but assuming that this is their vote), the second is dominance {\bf no matter the vote} of others (but assuming what their true preferences are). We consider this difference curious. However, despite such seemingly orthogonal epistemic and game-theoretical dimensions, they are after all very much related, as now shown in the following proposition. We therefore find the observations made in this proposition, although elementary, somewhat surprising. 
\begin{proposition}   \label{fact.ign} \ 
\begin{enumerate}
\item \label{fone} A dominant manipulation of a singleton set $\{\profile\}$ is a manipulation of $\profile$ (Def.\ \ref{def.success}).
\item \label{ftwo} A dominant manipulation of the hypercube knowledge profile $\mathcal{H}_s$ is a dominant preference given profile $\pi(s)$ (Def.\ \ref{def.dominantvote}).
\item \label{fthree} If a voter has a dominant manipulation then she \emph{knows} that she has a dominant manipulation.
\item \label{fxx} If a voter has a dominant manipulation then she may not have a manipulation.
\item Knowledge of dominant manipulation does not imply knowledge of manipulation (neither {\em de re} nor {\em de dicto}) (Def.\ \ref{def.knowmanip}).
\item \label{ffour} If a voter has {\em de re} knowledge of manipulation then she has a dominant manipulation.
\end{enumerate}
\end{proposition}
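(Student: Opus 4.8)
The plan is to prove the six claims in Proposition~\ref{fact.ign} by directly unwinding the relevant definitions; each one is a short argument, and several reuse the hypercube analysis from Proposition~\ref{prop.impossible}. I would take them essentially in the stated order, since the later items reference the earlier ones.

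For item~\ref{fone}, I would specialize the definition of dominant manipulation to the singleton information set $\pi([s]_{\sim_i}) = \{\profile\}$. The universal clause (``for all $\profile'' \in \{\profile\}$, $F(\profile''_{-i},\vote'_i) \succeq_i F(\profile'')$'') and the existential clause both collapse to the single profile $\profile$, and together they force $F(\profile_{-i},\vote'_i) \succ_i F(\profile)$, which is exactly Definition~\ref{def.success}. For item~\ref{ftwo}, I would observe that in the hypercube $\mathcal{H}_s$ the information set $\pi([s]_{\sim_i})$ consists of \emph{all} profiles $\profile''$ with $\vote''_i = \vote_i$. The subtlety is that Definition~\ref{def.dominantvote} quantifies over all $\profile'' \in O(\Candidates)^\Agents$ whereas the hypercube information set fixes $\vote''_i = \vote_i$; but since $F(\profile''_{-i},\vote'_i)$ and $F(\profile'')$ depend on $\profile''$ only through the other voters' votes $\profile''_{-i}$, restricting to profiles with $i$'s coordinate equal to $\vote_i$ still ranges over all possible $\profile''_{-i}$, so the two quantifications yield the same set of inequalities. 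Hence the dominant-manipulation conditions coincide with the dominant-preference conditions.

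Item~\ref{fthree} follows from the constraint $s \sim_i t \Rightarrow \pi(s)_i = \pi(t)_i$: the property ``$\vote'_i$ is a dominant manipulation of $\pi([s]_{\sim_i})$'' is a statement entirely about the information set $\pi([s]_{\sim_i})$, which is the same set at every $t \sim_i s$, so $i$ knows it. Items~\ref{fxx} and the fifth claim (knowledge of dominant manipulation does not imply knowledge of manipulation) are non-implications, so I would exhibit a small counterexample rather than argue in general; a two-state knowledge profile in which $\vote'_i$ weakly dominates over the information set, with the strict improvement realized at a \emph{possible} state but not at the actual profile $\pi(s)$, should witness both at once, since at the actual profile there is then no manipulation. (This is the same phenomenon flagged after Definition~\ref{def.dominantvote}: the strict part of dominance may hold at a profile other than $\profile$.)

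For item~\ref{ffour}, suppose $\vote'_i$ is a \emph{de re} manipulation: there is a single $\vote'_i$ that manipulates every $\profile'' \in \pi([s]_{\sim_i})$, i.e.\ $F(\profile''_{-i},\vote'_i) \succ_i F(\profile'')$ for all such $\profile''$. This strict inequality immediately gives both the weak universal clause and the strict existential clause in the definition of dominant manipulation, so $\vote'_i$ is a dominant manipulation of the information set. I expect the main obstacle to lie in item~\ref{ftwo}: one must argue carefully that fixing $i$'s coordinate in the hypercube does not weaken the quantifier over $\profile''$, because the outcome values $F(\profile''_{-i},\vote'_i)$ and $F(\profile'')$ are insensitive to $\vote''_i$, so no inequality is lost by the restriction. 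The other five items are essentially immediate from the definitions once the right one or two substitutions are made.
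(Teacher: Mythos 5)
Your arguments for items \ref{fone}, \ref{fthree}, and \ref{ffour} are correct and essentially identical to the paper's own terse justifications: the two clauses of dominant manipulation collapse onto $\profile$ for a singleton information set; dominant manipulation is a property of the information set (plus voter $i$'s own preference, constant across it), which is the same at every state $i$ considers possible; and strictness at every profile of the information set yields both the weak and the strict clause. Your counterexample plan for item \ref{fxx} and the fifth item also matches the paper's reasoning (the strict clause may hold only at a non-actual profile), with one caveat: for those items you must build the example so that the actual profile $\pi(s)$ admits \emph{no} manipulation by voter $i$ at all --- weak dominance with strictness elsewhere does not by itself exclude that some \emph{other} ballot manipulates $\pi(s)$. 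Such examples do exist, e.g.\ for plurality with two voters.

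The genuine gap is in item \ref{ftwo}, exactly at the step you flag as the main obstacle. Your claim that $F(\profile'')$ depends on $\profile''$ only through $\profile''_{-i}$ is false: in Def.~\ref{def.dominantvote} the quantifier ranges over all of $O(\Candidates)^\Agents$, so voter $i$'s own coordinate $\vote''_i$ varies, and $F(\profile'') = F(\profile''_{-i},\vote''_i)$ depends on it. Restricting to the hypercube information set fixes $\vote''_i = \vote_i$, so the dominant-manipulation conditions only assert that $\vote'_i$ weakly improves on the \emph{truthful} ballot against every combination of the others' ballots, whereas Def.~\ref{def.dominantvote} additionally demands that $\vote'_i$ weakly improves on \emph{every alternative own ballot} $\vote''_i$. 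Inequalities are therefore lost by the restriction --- the opposite of what you assert --- and the lost ones do not follow from the retained ones. Concretely, let $F$ depend only on voter $i$'s top choice via $a \mapsto c$, $b \mapsto b$, $c \mapsto a$, and let $i$'s true preference be $a \succ_i b \succ_i c$. A ballot $\vote'_i$ with top $b$ strictly improves on truthful voting against every $\profile''_{-i}$, hence is a dominant manipulation of $\mathcal{H}_s$; yet it is not a dominant preference, since for any $\profile''$ in which $i$'s ballot has top $c$ we get $F(\profile'') = a \succ_i b = F(\profile''_{-i},\vote'_i)$, violating the universal clause of Def.~\ref{def.dominantvote}. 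So your bridge between the two quantifications cannot be repaired for arbitrary resolute voting rules: either Def.~\ref{def.dominantvote} is read as comparing $\vote'_i$ only against the truthful ballot (in which case item \ref{ftwo} is immediate and your argument is unnecessary), or the implication genuinely needs restrictions on $F$. For comparison, the paper's own ``proof'' of this item is only the conceptual remark that the epistemic and game-theoretic quantifications coincide; it silently makes the very identification you tried to justify, and for rules such as plurality the statement is safe only vacuously, since (as the paper notes after the proposition, citing Conitzer et al.) no dominant manipulation of the hypercube exists for them.
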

\begin{proof} \ 
\begin{enumerate}
\item The strictness requirement of dominance must apply to $\profile$.
\item See the curious observation above. We recall that the notion of \emph{dominant manipulation} models uncertainty over the true preferences of others (the `epistemic dimension'), whereas the notion of \emph{dominant preference} models uncertainty over how others vote (the `game-theoretical dimension'). That the two coincide is a result.
\item The notion is defined with respect to an information set.
\item The strictness requirement of dominant manipulation might apply to another than the actual profile.
\item The strictness requirement of dominant manipulation need not apply to all profiles in the information set, as in knowledge of manipulation (both {\em de re} and  {\em de dicto}).
\item Strictness holds for all profiles in the information set, and therefore for some.
\end{enumerate} \vspace{-.65cm}
\end{proof}
Given Proposition \ref{fact.ign}.\ref{ftwo}, Proposition \ref{prop.impossible} stating that knowledge of manipulation is impossible in the hypercube, for majority voting, should therefore be credited to \cite{conitzeretal.aaai:2011} who proves that dominant manipulation is impossible under common ignorance of others' preferences, for a variety of voting rules (their results were subsequently strengthened in \cite{ReijngoudE12}). Proposition \ref{fact.ign}.\ref{ftwo} is somewhat surprising: it says that whether a preference is dominant does not depend on what you know of other voters' preferences. It holds given common knowledge of the profile iff it holds given common ignorance (except one's own preference) of the profile. Dominant strategy as in Def.\ \ref{def.dominantvote} seems too strong to be useful in voting theory, as it does not even rule out that everybody except you acts against their interests.

In view of Proposition \ref{fact.ign}.\ref{ffour}, an alternative designation for  a ({\em de re}) known manipulation is {\em strongly dominant manipulation}.

\section{Equilibrium and knowledge} \label{sec.eqkn}

\subsection{Conditional equilibrium} 

Determining equilibria under incomplete knowledge comes down to decision taking under incomplete knowledge. Therefore we have to choose a decision criterion. Expected utility makes no sense here, because we didn't start with probabilities over profiles in the first place, nor with utilities. In the absence of prior probabilities, the following three criteria make sense. $(i)$ The {\em insufficient reason} (or {\em Laplace}) criterion considers all possible states in a given situation as equiprobable. This criterion was used in \citet{agotnesetal:2011} to determine equilibria of certain (Bayesian) games of imperfect information. $(ii)$ The {\em minimax regret} criterion selects the decision minimizing the maximum utility loss, taken over all possible states, compared to the best decision, had the voter known the true state. $(iii)$ The {\em pessimistic} (or {\em Wald}, or {\em maximin}) criterion compares decisions according to their worst possible consequences. The latter criterion, that we also call {\em risk averse}, is one that fits well our probability-free and utility-free model; this was also the criterion chosen in \citet{conitzeretal.aaai:2011,meir:2015}. The only assumption here is that the probability distribution is positive in all states. We now fix this criterion for the rest of the paper. Pessimistic, optimistic, and other criteria only assuming positive probability are applied to social choice settings in \citet{parikhetal:2013}. \citet{meir:2015} also considers the minimax-regret criterion.

In the presence of knowledge, and on the assumption that voters know their own preference (so that, in game theoretical terms, the payoff function is uniform throughout an agent's information set),  
the definition of an equilibrium extends naturally. For each agent, the combination of an agent $i$ and an information set $[s]_{\sim_i}$ for that agent (for some state $s$ in the knowledge profile) defines a so-called virtual agent: we model these imperfect information games as Bayesian games \citep{harsanyi:1967}. Each virtual agent has the same set of strategies as the `original' agent. An equilibrium is then a profile of strategies such that none of the virtual agents has an interest to deviate. An alternative way to present a Bayesian game, applied in \citet{agotnesetal:2011}, is to change the set of strategies instead of the set of agents. Instead of each agent in each information set (a `virtual agent') choosing a strategy among the set of strategies, we have each agent choosing a {\em conditional strategy} among the larger set of conditional strategies, where conditions correspond to the information sets. We also follow that presentation for voting.  

For risk-averse voters knowing their own preferences we can effectively determine if a conditional profile is an equilibrium without taking probability distributions into account, unlike in the more general setting of Bayesian games that it originates with. 

For any $\Candidates' \subseteq \Candidates$, $\min_i \, \Candidates'$ is the (unique) $c \in \Candidates'$ such that $c' \succeq_i c$ for all $c' \in \Candidates'$. Let $\mathcal{P}$ be a set of profiles, ${\vote'_i} \in O(\Candidates)$, and $F$ a voting rule, then $\min_i \, F(\mathcal{P})$ denotes $\min_i \{ F(\profile) \mid {\profile} \in \mathcal{P} \}$, and $\min_i \, F(\mathcal{P}_{-i}, \vote'_i)$ denotes $\min_i \, \{ F(\profile_{-i},\vote'_i) \mid {\profile} \in \mathcal{P} \}$.

\begin{definition}[Pessimistic manipulation] Given is a profile model $\pmodel = (\States, \sim,\pi)$, $s \in \States$ with $\pi(s) = {\profile}$, and voting rule $F$.  The worst outcome for voter $i$ in information set $\pi([s]_{\sim_i})$ is $\min_i \, F(\pi([s]_{\sim_i}) )$. Preference ${\vote_i'} \in O(\Candidates)$ is a {\em pessimistic manipulation} for voter $i$ of $\pi([s]_{\sim_i})$ (or: of knowledge profile $\pmodel_s$) iff \[ \min_i \, F(\pi([s]_{\sim_i})_{-i},\vote'_i) \ \succ_i \ \min_i \, F(\pi([s]_{\sim_i})). \]
\end{definition}

\begin{definition}[Conditional preference, conditional profile, conditional equilibrium] \label{def.condeq} 
Given is a profile model $\pmodel = (\States, \sim,\pi)$ and voting rule $F$.\weg{such that every voter knows her preferences (truthful vote).} For each voter $i$, a {\em conditional preference} is a function $\cvote_i: \States/{\sim_i} \imp O(\Candidates)$ that assigns a preference to each information set for that voter. A {\em conditional profile} is a function from voters to conditional preferences. A conditional profile is a {\em conditional equilibrium} iff no agent has a pessimistic manipulation of any of its information sets.\footnote{If all states are considered equiprobable (the Laplace criterion), a sufficient (but not necessary) condition for a conditional profile to be a conditional equilibrium is that {\em no agent has a {\bf dominant} manipulation of any of its information sets}. We have not investigated this further.}
\end{definition}
In the situation without uncertainty, given $n$ voters, a profile and a voting rule determine a winner. In the strategic game matrix, the outcome is the $n$-tuple of values (payoffs) of that winner for each voter, and to determine if it is an equilibrium we compare the value for any voter $i$ with the value when $i$ had voted differently: the value should not be higher. The outcome of a conditional profile is not an $n$-tuple of values, but an $n$-tuple of $m$-tuples or vectors $(x_i^1,\dots,x_i^m)$, where voter $i$ has $m$ information sets and where $x_i^1,\dots,x_i^m$ are expected outcomes. These vectors are unordered, so we have to compute equilibria differently. For example, given a voter 7 with two information sets $x$ and $y$, we cannot say which of payoffs $(0,1)$ and $(1,0)$ she prefers. But we can say that virtual voter $(7,x)$ prefers the second (wherein she gets 1) over the first (wherein she gets 0), and that virtual voter $(7,y)$ prefers the first over the second. This merely is the Bayesian game calculation of equilibrium for virtual agents. 

A notable fact, that we consider a main result of our contribution, is that:

\begin{proposition} \label{prop.samedifferent}
States with the same profile can have different conditional equilibria.
\end{proposition}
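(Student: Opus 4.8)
The plan is to prove Proposition~\ref{prop.samedifferent} by explicit construction of a single profile model in which two states $s$ and $t$ are assigned the same profile $\profile$ (so $\pi(s) = \pi(t) = \profile$) but in which the conditional equilibria differ. The natural strategy is to reuse or lightly adapt the scenario of Example~\ref{ex.one}, since that example already exhibits the crucial phenomenon --- two states carrying the same profile but with different epistemic properties --- which is precisely the hypothesis we need. The engine driving the difference in equilibria must be the \emph{pessimistic manipulation} criterion from Definition~\ref{def.condeq}: since that criterion depends on $\min_i\, F(\pi([s]_{\sim_i}))$, i.e.\ on the \emph{set} of profiles the voter considers possible, and since at $s$ and $t$ the relevant information sets $\pi([s]_{\sim_i})$ and $\pi([t]_{\sim_i})$ differ, the worst-case outcomes a voter guards against will differ, and hence so will the conditional profiles that survive as equilibria.

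The key steps, in order, would be as follows. First, I would fix a concrete voting rule $F$ (plurality with a fixed tie-breaking order is the simplest candidate, as introduced in Section~\ref{sec.vote}) and a small model with enough states that two of them, $s$ and $t$, share a profile while sitting in genuinely different information sets for some voter. Second, for each candidate conditional profile $[\succ]$ I would compute, at both $s$ and $t$, the worst outcome $\min_i\, F(\pi([\cdot]_{\sim_i}))$ and compare it to $\min_i\, F(\pi([\cdot]_{\sim_i})_{-i}, \vote'_i)$ ranging over alternative votes $\vote'_i$, to detect whether any voter has a pessimistic manipulation. Third, I would exhibit a conditional profile that is a conditional equilibrium relative to the information structure at $s$ but admits a pessimistic manipulation at $t$ (or vice versa), thereby witnessing that the equilibrium sets attached to $s$ and $t$ genuinely differ. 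Since a conditional profile assigns votes per information set, I must be careful that the deviating voter's information set at $s$ really is distinct from the one at $t$, which is exactly what the "same profile, different knowledge" feature guarantees.

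The main obstacle I anticipate is bookkeeping rather than conceptual: because a conditional equilibrium is a property of a \emph{conditional} profile (a function on information sets) rather than of a state, I must phrase the claim so that "the conditional equilibria at $s$" and "the conditional equilibria at $t$" are well-defined and comparable objects. The cleanest route is to show that one and the same conditional profile is an equilibrium when evaluated against the information set containing $s$ but fails the pessimistic-manipulation test against the information set containing $t$; then the difference in equilibrium status is unambiguous. A secondary pitfall is ensuring the scenario is not degenerate --- e.g.\ that the candidate the manipulating voter can force in the worse-case set really is strictly better for her under $\succ_i$ than her pessimistic outcome, so that the strict inequality $\min_i\, F(\pi([s]_{\sim_i})_{-i},\vote'_i) \succ_i \min_i\, F(\pi([s]_{\sim_i}))$ is actually met on one side and violated on the other. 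Once a small explicit model is chosen, verifying this reduces to a finite and routine case check over the few relevant profiles and votes.
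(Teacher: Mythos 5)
Your plan is essentially the paper's own proof: the paper establishes the proposition by explicit example, exhibiting in Figure~\ref{fig.tuvexample} a three-state plurality model (states $t,u,v$ with tie-breaking $b \tie a \tie c$) in which $u$ and $v$ carry the same profile $\profile'$ but voter~2's information sets differ ($\{t,u\}$ versus $\{v\}$), and then checking via the pessimistic-manipulation criterion that voting for $c$ occurs in a conditional equilibrium at $v$ but not at $u$ --- exactly the construction-plus-finite-case-check you describe. One caution on your phrase ``reuse or lightly adapt'' Example~\ref{ex.one}: literal reuse of that $s,t,u$ model fails (the paper's own analysis of it concludes that there ``it does not make a difference to voter~2 whether he knows voter~1's preferences''), so the adaptation you allow for --- placing the two same-profile states at $\profile'$ and giving the uncertain voter full knowledge at one of them, as in the paper's $t,u,v$ model --- is not optional but essential.
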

\begin{proof}
We prove this by example, in the next subsection. The reader may wish to verify in Figure \ref{fig.tuvexample} in that subsection: that Sunil (2) has the same preference in state $u$ as in state $v$, that voting for Cars ($c$) is not in a conditional equilibrium in $u$, whereas voting for Cars is in a conditional equilibrium in $v$.
\end{proof}
In other words, Proposition \ref{prop.samedifferent} states that even if all voters have the same preferences, then when their knowledge about others' preferences is different, their manipulative behaviour may also be different. Readers who find this result obvious may wish to skip the next subsection and proceed with Section \ref{sec.dyn} on revealing voting preferences.

\subsection{Examples of conditional equilibria in plurality voting} \label{sec.ex1}

We recall Example \ref{ex.one} about Leela ($1$) and Sunil ($2$) voting, by plurality, for an animated movie that may be $a$ (Alice), $b$ (Brave), or $c$ (Cars), where Leela's preference is  $a \succ_1 b \succ_1 c$ and Sunil's preference is $c \succ_2 b \succ_2 a$. We further assume that mother Devi, the central authority, has tie-breaking preference $b\succ_\mathsf{tie} a\succ_\mathsf{tie} c$. We present equilibria when there is: no uncertainty, uncertainty between two states with different profiles, and different kinds of uncertainty between three states (for two profiles). 

\paragraph*{\bf No uncertainty} 
We express the payoffs for both voters by their ranking (0, 1, or 2) for the winner. As this is majority voting, preference relations with the same most preferred candidate are indistinguishable. So, `Leela votes $a \succ_1 b \succ_1 c$' and `Leela votes $a \succ_1 c \succ_1 b$' can both be represented by `Leela votes $a$'. (Given this identification, we call $a$ a vote and not a preference (relation).) This simplifies the outcomes matrix and the payoff matrix. If 1 votes for her preference $a$ and $2$ votes for his preference $c$, then the tie-breaking preferences determines $a$ as the winner, 2's least preferred candidate. A strategic vote of 2 for candidate $b$ makes $b$ win, a better outcome for voter 2. Equilibrium pairs of votes are $(a,b)$ and $(b,b)$. For voter 1, voting $a$ is dominant. 

\begin{figure}[h]
\[
\begin{array}{l}
	\begin{array}{|c|c|} \hline 1 & 2 \\ \hline a & c\\ b & b \\ c & a\\ \hline \end{array} \hspace{.7cm} \begin{array}{|c|c|} 
	\hline 
	1 & 2 \\ 
	\hline 
 c & c \\ 
	b & b \\ 
	a & a \\ 
	\hline 
	\end{array}
\hspace{.7cm}
\begin{array}{c|ccc} 
		1\backslash 2 & a & b & c  \\ 
		\hline
		a   & a & b & a  \\ 
		b   & b & b & b \\ 
		c   & a & b & c \\ 
		
	\end{array}
	\hspace{.7cm}
	\begin{array}{c|ccc} 
		1\backslash 2 & a & b & c  \\ 
		\hline
		a   & 2.0 & \fbox{1.1} & 2.0 \\ 
		b   & 1.1 & \fbox{1.1} & 1.1\\ 
		c   & 2.0 & 1.1 & 0.2 \\ 
		
	\end{array}
	\hspace{0.7cm}
		\begin{array}{c|ccc} 
			1\backslash 2 & a & b & c  \\ 
			\hline
			a   & 0.0 & \fbox{1.1} & 0.0  \\ 
			b   & \fbox{1.1} &\fbox{1.1}& 1.1 \\ 
			c   & 0.0 & 1.1 & \fbox{2.2}\\ 
			\end{array} 
\\
\ \ \profile \hspace{1.4cm} \profile' 
\end{array}
\]
\caption{Equilibria for $\profile$ and for $\profile'$. From left to right: profile $\profile$, profile $\profile'$, the matrix with winners, the outcome matrix for profile $\profile$, and the outcome matrix for profile $\profile'$. Equilibria are boxed. We write $i.j$ instead of $(i,j)$ to denote the values for voters 1 and 2 of the outcome of the election.}
\label{fig.zero}
\end{figure}

The other profile used in the examples in this section is where 1 shares the preferences of 2. This is the profile $\profile'$. 
Although $(c,c)$ is an equilibrium vote for $\profile'$, there are various suboptimal equilibria. There is no dominant vote. An overview of the equilibria for $\profile$ and for $\profile'$ is in Figure \ref{fig.zero}. 

\paragraph*{\bf Uncertainty between two profiles} 

Now consider the profile model consisting of two states $t$ with profile $\profile$ and $u$ with profile $\profile'$ and that are the same for voter 2, but different for voter 1 (the accessibility relation for voter 1 is the identity on the model and for voter 2 it is the universal relation). Figure \ref{fig.tuexample} depicts that profile model, the strategic game matrix with conditional preferences and winners, and the strategic game matrix with payoffs. Conditional profiles are pairs $(ij,k)$ where $i$ is 1's vote in $t$ and $j$ is 1's vote in $u$, and $k$ is 2's vote in $\{t,u\}$.

As voter 1 has two information sets, the conditional preference for 1 has two conditions, for each of which a choice between the three candidates (co-)determines the outcome of the majority vote. There are therefore 9 conditional preferences for voter 1. The matrix shows the conditional preferences for 1 by the candidate she votes for in $t$, followed by the candidate she votes for in $u$. Conditional preference $xy$ for 1 means that in state $t$ 1 votes $x$ and in state $u$ 1 votes $y$. The payoff matrix next to the winners matrix contains triples $ij.k$ for, in this order: the value of the worst outcome for 1 given $t$ ($\profile$) of that conditional profile, the value of the outcome for 1 given $u$ ($\profile'$), and the value of the worst outcome for 2 given $t,u$, his only information set. 

For example, for conditional profile $(ba,c)$ we get $ba$ as the entry in the winners matrix and $(10.0)$ as the entry in the payoff matrix: if 1 votes $b$ and 2 votes $c$ then the tie ($b \succ_\mathsf{tie} a \succ_\mathsf{tie} c$) makes $b$ win, value 1 for voter 1 and value 1 for voter 2 in $\profile$; if 1 votes $a$ and 2 votes $c$ then $a$ wins, value 0 for voter 1 and for voter 2 in $\profile'$; the worst of 0 and 1 is 0, so the value for voter 2 of this conditional profile is 0.

\begin{figure}[h]
\[
\begin{array}{l}
 \begin{array}{|c|c|} \hline 1 & 2 \\ \hline a & c\\ b & b \\ c & a\\ \hline \end{array}\mbox{------}2\mbox{------}\begin{array}{|c|c|} 
	\hline 
	1 & 2 \\ 
	\hline 
 c & c \\ 
	b & b \\ 
	a & a \\ 
	\hline 
	\end{array}
\\ 
t, \profile \hspace{2.2cm} u, \profile'
\end{array}
\hspace{.7cm}
\begin{array}{c|ccc} 
1\backslash 2 & a & b & c \\ 
\hline
aa   & aa &  bb & aa  \\ 
ab   & ab & bb & ab \\ 
ac   & aa & bb& ac \\ 
ba   & ba & bb& ba \\ 
bb   & bb & bb & bb \\ 
bc   & ba & bb& bc \\ 
ca   & aa & bb & ca \\ 
cb   & ab & bb& cb \\ 
cc   & aa & bb& cc\end{array}
\hspace{.7cm}
\begin{array}{c|ccc} 
1\backslash 2 & a & b & c \\ 
\hline
aa   & 20.0 & \fbox{11.1} & 20.0 \\ 
ab   & 21.0 &\fbox{11.1} & 21.0  \\ 
ac   & 20.0 &\fbox{11.1}& 21.0 \\ 

ba   & 10.0 & \fbox{11.1}& 10.0  \\ 
bb   & 11.1 & \fbox{11.1} & 11.1 \\ 
bc   & 10.0 & \fbox{11.1}& 12.1 \\ 

ca   & 20.0 & \fbox{11.1} & 00.0 \\ 
cb   & 21.0 & \fbox{11.1}& 01.1  \\ 
cc   & 20.0 &  11.1& 02.2
\end{array}
\]
\caption{Voter 2 is uncertain whether voter 1 prefers $a$ over $c$ or $c$ over $a$}
\label{fig.tuexample}
\end{figure}

The equilibria are, maybe, as expected. (We only consider pure strategies.) If the profile is $\profile$ then it is still dominant for voter 1 to vote $a$ (if the profile is $\profile'$, voting for $c$ is not dominant for voter 1). Because voter 2 is risk averse, $(c,c)$ is no longer an equilibrium vote in $\profile'$. As 2 is uncertain whether 1 prefers $c$ over $a$ or $a$ over $c$, the safer (risk avoiding) strategy for 2 is now to vote $b$, even when 1 and 2 both prefer $c$. Voter 1 knows this as well. 

Voter 2 does not have a dominant preference, because if he assumes that voter 1 always votes $c$, the best response is also to vote $c$ and not to vote $b$. So this is the only case where voting $b$ is not an equilibrium vote for 2.

\paragraph*{\bf Uncertainty between three states} 

We now add further uncertainty to the two-state profile model where 2 is uncertain between profiles $\profile$ and $\profile'$. Figure \ref{fig.tuvexample} displays two different ways to do this. In both depicted profile models voter 1 always knows voter 2's preferences. We will show that it is not rational for 2 to behave (vote) differently in $s$ and in $t$, in the first, but that it is rational for 2 to behave differently in $u$ and in $v$, in the second. 


\begin{figure}
{\small
\[ 
\begin{array}{l}
\begin{array}{|c|c|} \hline 1 & 2 \\ \hline a &  c\\ b& b \\ c & a  \\ \hline \end{array}\mbox{------}1\mbox{------}\begin{array}{|c|c|} \hline 1 & 2 \\ \hline a& c \\ b & b \\ c & a \\ \hline \end{array}\mbox{------}2\mbox{------}\begin{array}{|c|c|} 
	\hline 
	1 & 2 \\ 
	\hline 
	c & c \\ 
	b & b \\ 
	a & a \\ 
	\hline 
	\end{array} 
\\
\hspace{.0cm} s, \profile \hspace{2.2cm} t, \profile \hspace{2.2cm} u, \profile'
\end{array}
\hspace{.7cm}
\begin{array}{c|ccccccccc} 
1\backslash 2 & aa & ab & ac & ba & bb & bc & ca & cb & cc \\ 
\hline
aa   & aaa & abb & aaa & baa & bbb & baa & aaa & abb & aaa \\ 
ab   & aab& abb & aab & bab & bbb & bab & aab & abb & aab\\
ac & aaa & abb& aac& baa&  bbb & bac & aaa & abb & aac\\ 
ba   & bba & bbb & bba  & bba  & bbb & bba & bba& bbb& bba\\ 
bb & bbb & bbb & bbb& bbb & bbb & bbb & bbb & bbb & bbb \\
bc & bba & bbb & bbc  & bba  & bbb & bbc& bba& bbb& bbc\\
ca & aaa & abb & aca & baa & bbb & bca  & caa & cbb & cca\\
cb & aab & abb & acb & bab& bbb& bcb & cab& cbb& ccb\\
cc & aaa & abb & acc  & baa& bbb& bcc & caa& cbb& ccc				
\end{array} \] \[ \begin{array}{c|ccccccccc} 
1\backslash 2 & aa & ab & ac & ba & bb & bc & ca & cb & cc \\ 
\hline
aa   & 20.00 & 11.01 & 20.00 & 10.10 & \fbox{11.11}& 10.10  & 20.00 & 11.01& 20.00 \\ 
ab   & 21.00 & 11.01 & 21.00 & 11.10 & \fbox{11.11} & 11.10 & 21.00 & 11.01 & 21.00\\
ac & 20.00 & 11.01 & 22.00& 10.10 & \fbox{11.11}& 12.10 & 20.00 & 11.01 & 22.00 \\ 
ba   & 10.10 & \fbox{11.11} & 10.10 & 10.10  & \fbox{11.11}& 10.10 & 10.10  & \fbox{11.11} & 10.10 \\ 
bb & 11.11 & \fbox{11.11} & 11.11 & \fbox{11.11} & \fbox{11.11} & 11.11 & 11.11 & \fbox{11.11} & 11.11 \\
bc & 10.10  &  \fbox{11.11}& 12.11& 10.10  & \fbox{11.11}&  \fbox{12.11}& 10.10  & \fbox{11.11}& 12.11\\
ca & 20.00 & 11.01 & 00.00  & 10.10 &  11.11& 10.10  & 00.20 & 01.21 & 00.20\\
cb &  21.00 & 11.01 & 01.01 & 11.10 & 11.11 & 01.11 & 01.20 & 01.21& 01.21   \\
cc & 20.00 & 11.01 & 02.02& 10.10 &  11.11 & 02.12 & 00.20 & 01.21 &  02.22
\end{array}\] 
\[
\begin{array}{l}
\begin{array}{|c|c|} \hline 1 & 2 \\ \hline a &  c\\ b & b\\ c & a  \\ \hline \end{array}\mbox{------}2\mbox{------}\begin{array}{|c|c|} \hline 1 & 2 \\ \hline c& c \\ b& b \\ a & a \\ \hline \end{array}\mbox{------}1\mbox{------}\begin{array}{|c|c|} 
	\hline 
	1 & 2 \\ 
	\hline 
	c & c \\ 
	b& b\\ 
	a & a \\ 
	\hline 
	\end{array}
\\
\hspace{.0cm} t, \profile \hspace{2.2cm} u, \profile' \hspace{2.2cm} v, \profile'
\end{array}
\hspace{.7cm}
\begin{array}{c|ccccccccc} 
1\backslash 2 & aa & ab & ac & ba & bb & bc & ca & cb & cc \\ 
\hline
aa & aaa & aab & aaa & bba &  bbb& bba & aaa & aab& aaa \\ 
ab & abb & abb & abb& bbb& bbb & bbb& abb & abb& abb\\
ac & aaa & aab & aac& bba & bbb& bbc  & aca & acb& acc\\ 
ba & baa & bab & baa & bba  & bbb& bba  & baa  &  bab & baa \\ 
bb & bbb & bbb & bbb & bbb & bbb & bbb & bbb & bbb & bbb \\
bc & baa & bab & bac& bba  & bbb & bbc& bca  & bcb & bcc \\
ca & aaa & aab & aaa  & bba & bbb & bba & caa & cab & caa \\
cb & abb & abb & abb & bbb & bbb & bbb& cbb & cbb & cbb \\
cc & aaa & aab & aac & bba & bbb & bbc & cca & ccb & ccc
\end{array}\] \[\begin{array}{c|ccccccccc} 
1\backslash 2 & aa & ab & ac & ba & bb & bc & ca & cb & cc  \\ 
\hline
aa   & 20.00 & 20.01& 20.00 & 10.10 & \fbox{11.11}& 10.10  & 20.00 &  20.01& 20.00 \\ 
ab   & 21.01 & 21.01 & 21.01 & \fbox{11.11}& \fbox{11.11} & \fbox{11.11}& 21.01 & 21.01 & 21.01\\
ac & 20.00 & 20.01 & 20.02& 10.10 & 11.11& \fbox{11.12} & 20.00 & 21.01 & 22.02\\ 
ba   & 10.00& 10.01 & 10.00 & 10.10  & \fbox{11.11}& 10.10 & 10.00  & 10.01 & 10.00 \\ 
						bb & 11.11 & 11.11 & 11.11 & \fbox{11.11} & \fbox{11.11} & \fbox{11.11} & 11.11 & 11.11 & 11.11 \\
						bc & 10.00 & 10.01 & 10.02& 10.10  & 11.11 & \fbox{11.12}& 10.10  &  11.11 & 12.12 \\
						ca & 20.00 & 20.01 & 20.00  & 10.10 & \fbox{11.11} & 10.10 & 00.00 & 00.01 & 00.00 \\
						cb & 21.01 & 21.01 & 21.01 & \fbox{11.11}& \fbox{11.11} & \fbox{11.11}& 01.11 & 01.11 & 01.11\\
						cc & 20.00 & 20.01 & 20.02 & 10.10 & 11.11 & 11.12 & 00.20 & 01.21 & 02.22 						\end{array}\]
}
\caption{Conditional equilibria for profile models where two states have the same profile}
\label{fig.tuvexample}
\end{figure}

Figure \ref{fig.tuvexample} also gives an overview of the conditional equilibria for both profile models, including the matrices with winners in order to calculate the payoffs. As it may be confusing to see three winners but four payoff values let us explain once more the mechanics of conditional profiles and conditional equilibria. For example, take the $t,u,v$ model, with conditional profile $(ac, bc)$ that is an equilibrium, where in the winners matrix we find $bbc$ for that entry, and where in the payoff matrix we find $11.12$. Conditional profile $(ac,bc)$ is the conditional profile such that
\begin{itemize}
\item If 1 prefers $a$ (i.e., in state $t$) then she votes $a$, and if 1 prefers $c$ (i.e., in states $u,v$) then she votes $c$.
\item If 2 is uncertain whether 1 prefers $a$ (i.e., in states $t,u$) then he votes $b$, and if 2 knows that 1 prefers $c$ (i.e., in state $v$) then he votes $c$.
\end{itemize}
The winners in states $t,u,v$ of these conditional preferences are, respectively, $b,b,c$. If the state is $t$, then 1 votes $a$ and 2 votes $b$, so $b$ wins. If $u$, then 1 votes $c$ and 2 votes $b$, so $b$ wins. If $v$, then 1 votes $c$ and 2 votes $c$, so $c$ wins.

The payoff entry is $11.12$ because: for voter 1 in state $t$, the worst (and only) outcome is $b$ with value 1, for voter 1 in states $u,v$ the worst of $b$ and $c$, with values 1 and 2, is (also) 1; for voter 2 in states $t,u$ the worst of $b$ and $b$, both with value 1, is 1, and for voter 2 the worst and only outcome in state $v$ is $c$ with value 2.

Conditional profile $(ac,bc)$ is an equilibrium. For this, we have to check four virtual voters. For example, voter 1 in state $t$ cannot do better, because the first digit of the payoff entries for profiles $(bc,bc)$, $(cc,bc)$ is not greater than 1; voter 1 in class $\{u,v\}$ cannot do better: we check the second digit of the entries for conditional profiles $(aa,bc)$ and $(ab,bc)$. Voter 2 cannot do better in $t,u$, check the third digit in entries for $(ac,ac)$ and $(ac,cc)$; and voter 2 also cannot do better in $v$, in which case we check the fourth digit in the entries for $(ac,ba)$ and $(ac,bb)$. We are done.


In the $s,t,u$ case, it does not make a difference to voter 2 whether he knows voter 1's preferences. If the profile is $\profile$, voter 1 {\em knows} that voting for $a$ is dominant. On that assumption, voter 2 should vote $b$, such that $b$ wins. Indeed, in almost all equilibria (except $(bc,bc)$), $b$ wins and the payoff is 1 for both voters. Unlike for the two-state example, where in all equilibria voter 2 votes $b$, there are now equilibria wherein voter 2 does not vote $b$. However, these are not really interesting, as 1 votes $b$ in these, which is dominated by 1 voting $a$.

On the other hand, in the $t,u,v$ case, it makes a difference to voter 2 if 1 is uncertain or not. There are equilibria wherein both players vote $c$ in state $v$, namely $(ac,bc)$ and $(bc,bc)$. Whereas there is no equilibrium wherein both players vote $c$ in state $u$, even though  that would have been just as much in their interest. We can easily justify this result by our intuitions. If voter 2 is uncertain about 1's preferences, the worst-case avoiding strategy remains voting $b$. If voter 2 knows that 1's preferences are his own, even 1's uncertainty is not enough to make him change his vote. The same cannot be said for voter 1 in that state $v$, she has to weigh the odds against voter 2 playing safe and voting $b$ instead; but either way, voting $c$ then also gives her best result. So voter 1 should be indifferent between $b$ and $c$ when arguing from that worst-case scenario, and this is indeed the case: $(ab,bc)$ and $(bb,bc)$ are also equilibria.

\section{Revealing voting preferences} \label{sec.dyn}

We can extend the setting for the interaction of voting preferences and knowledge of the previous sections with operations wherein voters are informed of other's preferences, thus reducing (`updating') their uncertainty. An obvious choice for such updates is the {\em public announcement} \citep{plaza:1989} of propositions about profiles. A public announcement can be modelled as an operation $\pmodel_s \mapsto (\pmodel|T)_s$, where $T \subseteq \States$ is the denotation in $\pmodel$ of a proposition about profiles $\phi$, and $\pmodel|T$ means model restriction to subdomain $T$. In that case we also write $(\pmodel|\phi)_s$.

Given a knowledge profile $\pmodel_s$, the precondition for execution of the operation {\em public announcement of} $\phi$ (or {\em update with} $\phi$) is that $\phi$ is true in $\pmodel_s$, and the way to execute it is to restrict the model $\pmodel$ to all the states where $\phi$ is true. We can then investigate the truth of propositions about profiles in that model restriction. This allows us to evaluate propositions about profiles of shape `after update with $\phi$, $\psi$ (is true),' such as: `After voter 1 reveals her preference, voter 2 knows that he has a manipulation'. 

All this is embodied in the following definition (see Section \ref{sec.logic} for a formal version).

\begin{definition}[Updated knowledge profile] \label{def.updated}
Let $\pmodel_s$ be a knowledge profile, where $\pmodel = (S,\sim,\pi)$, and let $\phi$ be a proposition about profiles with denotation $S'\subseteq S$ and such that $s \in S'$. Then the {\em updated knowledge profile} $(\pmodel|\phi)_s$ is defined as $\pmodel|\phi = (S',\sim',\pi')$ where ${\sim_i'} = {\sim_i} \cap {(S'\times S')}$ and for all ${\profile}\in O(\Candidates)^\Agents$, $\pi'(\profile) = \pi(\profile) \cap S'$.

Let $\psi$ also be a proposition about profiles. In $\pmodel_s$ it is true that $\psi$ after update with $\phi$, iff whenever $\phi$ is true in $\pmodel_s$, $\psi$ is true in $(\pmodel|\phi)_s$.
\end{definition}

A public announcement is considered information coming from an outside source (for example, a central authority) and that is reliable. However, it is common in dynamic epistemic logic to model a public announcement $\phi$ by an agent $a$ (an inside source, so to speak, that is modelled in the system) as the public announcement of $K_a \phi$. This makes it possible to formalize that a voter reveals her preference to the other voters, as above. But given this identification, a voter revealing her preferences to me is indistinguishable in the logical analysis from someone else revealing to me that voter's preferences. Yet another situation is when a voter reveals her preferences {\em only} to another voter but not to all voters. This is not a public announcement but a {\em private} announcement. Formalizing this is quite doable but requires a more complex analysis. Finally, announcements can be {\em insincere}, or {\em lies}. This also requires a more complex dynamic epistemic analysis.

We proceed with some results. Clearly, manipulations and equilibrium profiles are preserved after update, as these only depend on the profile of the actual state, that is preserved after any (truthful) update. These are not epistemic notions. For those, we have that:
\begin{proposition} \label{proprop}
Knowledge of manipulation is preserved after update.
\end{proposition}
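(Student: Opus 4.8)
The plan is to treat separately the two forms of knowledge of manipulation from Definition \ref{def.knowmanip} --- \emph{de dicto} and \emph{de re} --- and to show each is preserved, exploiting the single structural fact that a truthful public announcement can only \emph{shrink} an information set. Both definitions are universally quantified over the information set, and shrinking the range of a universal quantifier preserves its truth, so everything hinges on this monotonicity together with the observation that ``being a manipulation in $\profile$'' is intrinsic to $\profile$.

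First I would record the monotonicity property. By Definition \ref{def.updated}, updating $\pmodel_s$ with $\phi$ restricts the domain to the set $S'$ of states satisfying $\phi$, with $s \in S'$ because $\phi$ is true in $\pmodel_s$ (this is the precondition), and sets $\sim_i' = \sim_i \cap (S' \times S')$. Hence the updated information set of $i$ at $s$ is $[s]_{\sim_i'} = [s]_{\sim_i} \cap S' \subseteq [s]_{\sim_i}$, and therefore the set of profiles that $i$ considers possible can only shrink: $\pi([s]_{\sim_i'}) \subseteq \pi([s]_{\sim_i})$.

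Second, I would note that whether a given preference $\vote'_i$ is a manipulation in a profile $\profile$ (Definition \ref{def.success}) is a property of $\profile$ alone: it refers only to $F$, to $\vote_i$, and to the comparison $F(\profile_{-i},\vote'_i) \succ_i F(\profile)$, and it is untouched by the update, which modifies neither the surviving profiles nor $F$. Thus each clause ``$\vote'_i$ is a manipulation in $\profile$'' keeps its truth value for every profile that survives in $\pi([s]_{\sim_i'})$.

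With these two observations the result is immediate in both cases. For \emph{de re} knowledge, if some $\vote'_i$ is a manipulation in every $\profile \in \pi([s]_{\sim_i})$, then a fortiori it is a manipulation in every $\profile \in \pi([s]_{\sim_i'})$, so the same witness certifies \emph{de re} knowledge in $(\pmodel|\phi)_s$. For \emph{de dicto} knowledge, if every $\profile \in \pi([s]_{\sim_i})$ admits some manipulation, then in particular every $\profile$ in the subset $\pi([s]_{\sim_i'})$ does. I do not expect a genuine obstacle; the only point deserving care is to confirm that manipulation is a property of a profile and not of the model, so that deleting states cannot destroy a manipulation that previously held --- in contrast with the converse direction, where an update may well \emph{create} knowledge of manipulation by removing a counterexample state, which is not what ``preserved'' asks for.
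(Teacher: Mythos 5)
Your proof is correct and follows essentially the same route as the paper's: the paper also observes that knowledge of manipulation (both \emph{de re} and \emph{de dicto}) is a universal property over the information set, which can only shrink under a truthful update, so the property is preserved. Your write-up simply makes explicit the two ingredients the paper leaves implicit --- the inclusion $\pi([s]_{\sim_i'}) \subseteq \pi([s]_{\sim_i})$ and the fact that being a manipulation is intrinsic to a profile rather than to the model.
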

\begin{proof}
In any state of the information set of the voter knowing the manipulation, the profile of that state has a manipulation, by Def.~\ref{def.knowmanip}. This is a universal property that is preserved after update. This holds for {\em de re} as well as {\em de dicto} knowledge.
\end{proof}

\begin{proposition} \label{proprop1}
Dominant manipulation is not preserved after update.
\end{proposition}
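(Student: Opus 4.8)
The plan is to prove the statement by a single counterexample: I will exhibit a knowledge profile $\pmodel_s$ and a truthful announcement $\phi$ such that some voter has a dominant manipulation of $\pi([s]_{\sim_i})$ in $\pmodel_s$, but no dominant manipulation of the surviving information set in $(\pmodel|\phi)_s$. This is the natural route given that the preceding Proposition~\ref{proprop} establishes preservation for knowledge of manipulation; the point of contrast is precisely that dominant manipulation carries an \emph{existential} (strictness) clause that an update can destroy.

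First I would record the structural reason why only that clause is fragile. An update restricts the model, so the surviving information set in $\pmodel|\phi$ is a \emph{subset} of $\pi([s]_{\sim_i})$. The universal clause of dominant manipulation---that $F(\profile''_{-i},\vote'_i) \succeq_i F(\profile'')$ for all $\profile''$ in the set---is inherited by every subset, hence is automatically preserved. Consequently any counterexample must be arranged so that the update deletes exactly the profile(s) witnessing the existential clause $F(\profile''_{-i},\vote'_i) \succ_i F(\profile'')$, while retaining only profiles at which the candidate manipulation merely \emph{ties} the sincere outcome.

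Second I would instantiate this in plurality voting with the running two-voter setting (candidates $a,b,c$, tie-breaking $b \tie a \tie c$). I take voter~$2$ with fixed preference $c \succ_2 b \succ_2 a$ (so voter~$2$ knows her own preference throughout) and a two-state model in which voter~$2$ is uncertain whether voter~$1$'s top choice is $a$ (state $s_1$, profile $P_a$ with $a \succ_1 b \succ_1 c$) or $b$ (state $s_2$, profile $P_b$ with $b \succ_1 a \succ_1 c$), where, as in the definition of dominant manipulation, voter~$1$ is assumed to vote sincerely. A direct tie-breaking computation gives: at $P_a$ the sincere outcome is $a$ (value $0$ for voter~$2$) while the deviation $\vote'_2=b$ yields winner $b$ (value $1$), a strict gain; at $P_b$ both the sincere vote and the deviation $b$ yield winner $b$ (value $1$), a tie. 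Hence $b$ is a dominant manipulation of the information set $\{P_a,P_b\}$. Taking the actual state to be $s_2$ and announcing a proposition $\phi$ true at $s_2$ but false at $s_1$ (for instance ``$b \succ_1 a$'', which holds in $P_b$ and fails in $P_a$), the update of Definition~\ref{def.updated} collapses the information set to $\{P_b\}$; there $b$ only ties the sincere outcome, so the strictness clause fails and $b$---indeed no preference---is a dominant manipulation. This yields the claim.

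I expect the only genuine work to be the calibration in the third step: choosing the two profiles so that the candidate manipulation is \emph{weakly} dominant across the whole pre-update information set (not merely strictly good at the deleted state $P_a$) and \emph{exactly ties} at the surviving state $P_b$, and checking that $\phi$ is true at the actual state $s_2$ so that the announcement is a legitimate truthful update. Once the three relevant tie-breaking outcomes are tabulated this is routine, and the subset argument of the second paragraph then certifies that nothing but strictness can be, and here is, lost.
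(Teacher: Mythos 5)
Your proposal is correct, and its core argument is exactly the paper's: the paper's entire proof is the one-sentence structural observation that dominant manipulation has an existential (strictness) clause whose witnessing profile may be among the states deleted by the update, while the universal weak-dominance clause survives restriction to any subset --- this is precisely your second paragraph. Where you differ is that you go on to instantiate the mechanism with an explicit counterexample, which the paper never does; strictly speaking this makes your proof the more complete one, since a non-preservation claim is only fully established by an instance. Your instance checks out: with tie-breaking $b \tie a \tie c$ and voter $2$'s preference $c \succ_2 b \succ_2 a$ fixed across both states, the sincere outcomes are $F(P_a)=a$ (the tie $\{a,c\}$ breaks to $a$) and $F(P_b)=b$, while voting $b$ yields winner $b$ in both profiles, so top-$b$ is a dominant manipulation of $\{P_a,P_b\}$ (strict at $P_a$, tied at $P_b$); after the truthful announcement of $b \succ_1 a$ at the actual state $s_2$, the information set collapses to $\{P_b\}$, where voter $1$'s vote $b$ wins no matter what voter $2$ does (any tie containing $b$ breaks to $b$), so no preference achieves strictness and no dominant manipulation remains. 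In short: the paper's proof buys brevity at the price of being only a proof sketch; yours buys rigor and self-containedness at the price of the routine tie-breaking calculations, and the two are otherwise the same argument.
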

\begin{proof} This is an existential property that may not be preserved, namely if the (existential) strictness requirement is only satisfied in states that are removed in the update. 
\end{proof}

\noindent To investigate how conditional equilibria evolve after updates we first define the update of a conditional profile.
\begin{definition}[Updated conditional profile]
Let profile model $\pmodel = (\States,\sim,\pi)$ and conditional profile $[\profile]$ be given, where $[\vote]_i: {\States\setminus\!\sim_i} \imp {O(\Candidates)}$ are conditional preferences. Let $\phi$ be a proposition about profiles such that $\pmodel|\phi = (\States',\sim',\pi')$. Then the {\em updated conditional profile} $[\profile']$ (w.r.t.\ $\pmodel|\phi$) consists of conditional preferences $[\vote']_i : {\States'\setminus\!\sim'_i} \imp {O(\Candidates)}$ defined as: for all $s \in \States'$, $[\vote']_i([s]_{\sim'_i}) = [\vote]_i([s]_{\sim_i})$. 
\end{definition}
A preference $[\vote]_i: {\States\setminus\!\sim_i} \imp {O(\Candidates)}$ may be affected in three different ways in an update:
\begin{itemize}
\item An information set for voter $i$ disappears, namely if none of its states satisfies the update $\phi$. The updated preference then has one less condition (the virtual voter $(i,[s]_{\sim_i})$ ceases to exist).
\item An information set for voter $i$ shrinks, because some states satisfy $\phi$ and others do not satisfy $\phi$. We then have that $[s]_{\sim'_i} \subset [s]_{\sim_i}$. This may affect the value for $i$ of preference $\vote_i$: states with minimal value may have been removed, namely if $\min_i \, F(\pi([s]_{\sim_i})) \prec_i \min_i \, F(\pi'([s]_{\sim'_i}))$. 
\item An information set for $i$ remains the same, because all of its states satisfy $\phi$.  The expected worst outcome for $i$ remains the same.
\end{itemize}
\begin{proposition}
Conditional equilibrium is not preserved after update.
\end{proposition}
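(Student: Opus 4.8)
The plan is to prove the proposition by a counterexample, reusing the scenarios of Figures~\ref{fig.tuexample} and~\ref{fig.zero} and exploiting precisely the second of the three update effects listed above: the shrinking of an information set can raise the pessimistic worst-case value $\min_i F(\pi([s]_{\sim_i}))$, and can thereby create a pessimistic manipulation that did not exist before. Since the proposition only asserts non-preservation, a single conditional profile that is a conditional equilibrium before an update but not after will suffice.

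Concretely, I would start from the two-state profile model of Figure~\ref{fig.tuexample}, where voter 2 is uncertain between $\profile$ (state $t$) and $\profile'$ (state $u$), and take the actual state to be $u$. In that model the conditional profile $(ac,b)$ --- voter 1 votes $a$ in $t$ and $c$ in $u$, voter 2 votes $b$ --- is a conditional equilibrium: it appears boxed with payoff $\fbox{11.1}$, reflecting that voter 2's risk aversion over $\{\profile,\profile'\}$ makes voting $b$ his safe choice. I would then update with the announcement that voter 1 prefers $c$ over $a$, that is, with the proposition about profiles $c \succ_1 a$. This is truthful, since it holds in the actual state $u$ (where $\profile'$ has $c \succ_1 b \succ_1 a$) and fails in $t$ (where $\profile$ has $a \succ_1 b \succ_1 c$), so its denotation is $\{u\}$. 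By Definition~\ref{def.updated} the updated model is the singleton $\{u\}$ with profile $\profile'$: voter 1's information set $\{t\}$ disappears, and voter 2's information set shrinks from $\{t,u\}$ to $\{u\}$. The updated conditional profile is therefore the vote pair $(c,b)$ on the no-uncertainty model for $\profile'$.

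It then remains to verify that $(c,b)$ is \emph{not} a conditional equilibrium for $\profile'$, which is read off directly from the $\profile'$ outcome matrix of Figure~\ref{fig.zero}. With the worst-case value for voter 2 no longer pinned down by the removed state $t$, from $(c,b)$ with value $1$ voter 2 can deviate to $c$, so that $c$ wins with value $2$ (the boxed entry $\fbox{2.2}$). Hence voter 2 now has a pessimistic manipulation of his singleton information set, so $(c,b)$ is not a conditional equilibrium, whereas $(ac,b)$ was; this establishes the proposition.

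The main obstacle is conceptual rather than computational: the counterexample must be engineered so that the update genuinely removes the worst-case state that was sustaining the equilibrium. The pessimistic-manipulation inequality in Definition~\ref{def.condeq} compares two minima, $\min_i F(\pi([s]_{\sim_i})_{-i},\vote'_i)$ and $\min_i F(\pi([s]_{\sim_i}))$, both of which can move when an information set shrinks, so one must ensure that after the update the deviation $b \mapsto c$ for voter 2 becomes strictly improving rather than merely tied. The underlying intuition, which I would state explicitly, is that risk aversion over a genuine uncertainty can enforce an equilibrium that collapses once that uncertainty is resolved by revealing a preference.
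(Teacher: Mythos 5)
Your proof is correct, and the counterexample you construct is exactly the one the paper itself uses --- only not inside the proof of this proposition. The paper's own proof is schematic: it states the general configuration under which preservation fails, namely that before the update $\min_i \, F(\pi([s]_{\sim_i})) \succeq_i \min_i \, F(\pi([s]_{\sim_i})_{-i},\vote_i'')$ while after the update $\min_i \, F(\pi'([s]_{\sim'_i})) \prec_i \min_i \, F(\pi'([s]_{\sim'_i})_{-i},\vote_i'')$, and it observes that this requires a removed profile ${\profile'} \in \pi([s]_{\sim_i}\!\setminus\![s]_{\sim_i'})$ with $F(\profile') \succ_i F(\profile'_{-i},\vote_i'')$; no concrete witness is exhibited there. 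The witness is deferred to Example~\ref{ex.doubidou} in Section~\ref{sec.dyn}, where announcing $c \succ_1 a$ in state $u$ of the model of Figure~\ref{fig.tuexample} turns the conditional equilibrium $(ac,b)$ into the non-equilibrium $(c,b)$ --- precisely your scenario, down to voter 2's deviation from $b$ to $c$ with payoff $2$ rather than $1$. So you and the paper exploit the same mechanism (the update removes state $t$, which was the worst-case state making the deviation to $c$ unattractive to the risk-averse voter 2); the difference is presentational. Your version folds the witness into the proof, making it self-contained and directly checkable against the matrices of Figures~\ref{fig.zero} and~\ref{fig.tuexample}; the paper's version isolates the abstract condition that any such failure must satisfy, which is more informative about \emph{when} non-preservation can occur, but strictly speaking leaves the existence claim resting on the later example. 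One small imprecision in your opening sentence: what creates the manipulation here is not that the worst-case value of the \emph{current} vote rises, but that the worst-case value of the \emph{deviation} rises past it (from $0$ to $2$ while the current vote stays at $1$); you correct this yourself when you note that both minima in the definition can move.
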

\begin{proof}
An information update may affect an equilibrium as follows. The outcome for a virtual voter $(i,[s]_{\sim_i})$ casting vote $\vote_i$ in the equilibrium is the worst outcome (winner) for voter $i$ in information set $[s]_{\sim_i}$. This payoff value of that winner is at least as good as the worst outcome in information set $[s]_{\sim_i}$ for any other preference $\vote_i''$. This is fragile and not preserved after update. More precisely, we may have that (we recall that for all ${\profile'} \in \pi([s]_{\sim_i})$, ${\vote'_i} = {\vote_i}$):
\[\begin{array}{rlll}
& \min_i \, F(\pi([s]_{\sim_i})) & \succeq_i & \min_i \, F(\pi([s]_{\sim_i})_{-i},\vote_i'') \\
\text{whereas } \ \ & \min_i \, F(\pi'([s]_{\sim'_i})) & \prec_i & \min_i \, F(\pi'([s]_{\sim'_i})_{-i},\vote_i'')
\end{array}\]
where as before $\pi'$ and $\sim_i'$ are the valuation and accessibility relation in the updated profile model; and this requires a ${\profile'} \in \pi({[s]_{\sim_i}}\!\!\setminus\!{[s]_{\sim_i'}})$ such that: 
\[\begin{array}{rlll}
& F(\profile')  & \succ_i & F(\profile'_{-i},\vote_i'')
\end{array} \vspace{-1cm} \] 
\end{proof}

\begin{proposition}
(Not being a conditional equilibrium) is not preserved after update.
\end{proposition}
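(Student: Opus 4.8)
The plan is to mirror the proof of the previous proposition, but with the roles of the two displayed inequalities reversed: where there we destroyed an equilibrium by letting a deviation become strictly profitable after an update, here I would destroy a \emph{non}-equilibrium by letting a previously profitable deviation cease to be profitable once the states responsible for it are removed. Since being a conditional equilibrium is characterised by the absence of a pessimistic manipulation for every virtual voter, it suffices to exhibit a profile model, a conditional profile $[\profile]$, and an update $\phi$ such that some virtual voter $(i,[s]_{\sim_i})$ has a pessimistic manipulation before the update while \emph{no} virtual voter has one afterwards.

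Concretely, I would start from a conditional profile that fails to be an equilibrium because a virtual voter $(i,[s]_{\sim_i})$ whose equilibrium vote $\vote_i$ is pessimistically dominated by some $\vote_i''$, i.e.
\[ \min_i \, F(\pi([s]_{\sim_i})) \ \prec_i \ \min_i \, F(\pi([s]_{\sim_i})_{-i},\vote_i''). \]
The worst outcome $\min_i \, F(\pi([s]_{\sim_i}))$ under $\vote_i$ is attained at some profile $\profile^\ast \in \pi([s]_{\sim_i})$. The announcement $\phi$ I would choose is one whose denotation excludes $\profile^\ast$ (for instance, $i$ learns a fact about another voter's preference that holds throughout $[s]_{\sim_i}$ except at the states projecting to $\profile^\ast$), so that $[s]_{\sim_i}$ shrinks to $[s]_{\sim_i'}$ with $\profile^\ast \notin \pi'([s]_{\sim_i'})$. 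As noted in the discussion preceding the previous proposition, removing minimal-value states can only raise the worst outcome, so that $\min_i \, F(\pi([s]_{\sim_i})) \prec_i \min_i \, F(\pi'([s]_{\sim_i'}))$, and if the removal is tuned so that the new worst outcome under $\vote_i$ weakly dominates the worst outcome under every alternative, the manipulation is gone:
\[ \min_i \, F(\pi'([s]_{\sim_i'})) \ \succeq_i \ \min_i \, F(\pi'([s]_{\sim_i'})_{-i},\vote_i''). \]

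The main obstacle is that killing the single witnessing manipulation is not by itself enough: a conditional equilibrium requires that \emph{no} virtual voter---including the others, and $i$ in its remaining information sets---acquires a pessimistic manipulation after the update. There are two clean ways to discharge this. Either take the model minimal enough that $(i,[s]_{\sim_i})$ is the unique obstruction and check that shrinking $[s]_{\sim_i}$ creates no new profitable deviation elsewhere; or, following the style of Section~\ref{sec.ex1}, prove the proposition outright by a small concrete example, reading the equilibria of the pre- and post-update models off their payoff matrices as in Figure~\ref{fig.tuvexample}. I expect the example route to be the most transparent, since verifying the global absence of manipulations is then a finite, mechanical matrix check rather than an argument that must quantify over all deviations of all virtual voters simultaneously.
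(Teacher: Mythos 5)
Your proposal is correct and matches the paper's own argument: the paper's proof consists precisely of the two reversed inequalities you wrote down, namely a preference $\vote_i''$ that is a pessimistic manipulation before the update ($\min_i \, F(\pi([s]_{\sim_i})) \prec_i \min_i \, F(\pi([s]_{\sim_i})_{-i},\vote_i'')$) but not after ($\min_i \, F(\pi'([s]_{\sim'_i})) \succeq_i \min_i \, F(\pi'([s]_{\sim'_i})_{-i},\vote_i'')$). Your further concern about verifying that \emph{no} virtual voter retains a manipulation is handled in the paper exactly the way you anticipate, by a concrete instance (Example \ref{ex.doubidou}, where after voter 1 reveals her preference in state $u$, the profile $(c,c)$ becomes an equilibrium), so you have if anything been more explicit than the paper itself.
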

\begin{proof}
This requires a preference $\vote_i''$ for which $\min_i \, F(\pi([s]_{\sim_i}) \prec_i \min_i \, F(\pi([s]_{\sim_i})_{-i},\vote_i'')$ and $\min_i \, F(\pi'([s]_{\sim'_i})) \succeq_i \min_i \, F(\pi'([s]_{\sim'_i})_{-i},\vote_i'')$.
\end{proof}

So conditional equilibria can both disappear and appear after updates (see also Example \ref{ex.doubidou} below). It is unclear to us if there are general patterns here. However, additional strategic behaviour comes into the picture with these negative results. An update may consist of a voter revealing her voting preferences. It may be that this voter's sincere preference is not part of an equilibrium conditional profile, but that after this voter reveals her sincere preference, the updated conditional profile is an equilibrium. This interaction between {\em strategic voting} and {\em strategic communication} may be of interest. 

\begin{example} \label{ex.doubidou}

We recall the two-state profile model presented in Section \ref{sec.ex1} and displayed again below. We now add dynamics to this example: voter 1 informs voter 2 of her true preference. This is also displayed below.

\[ \begin{array}{l}
\begin{array}{|c|c|} \hline 1 & 2 \\ \hline a & c \\ b & b \\ c & a \\ \hline \end{array}\quad \Pmi \quad\begin{array}{|c|c|} \hline 1 & 2 \\ \hline a & c \\ b & b \\ c & a \\ \hline \end{array}\mbox{------}2\mbox{------}\begin{array}{|c|c|} 
\hline 
1 & 2 \\ 
\hline 
c & c \\ 
b & b \\ 
a & a \\ 
\hline 
\end{array}
\quad \Imp \quad
\begin{array}{|c|c|} 
\hline 
1 & 2 \\ 
\hline 
c & c \\ 
b & b \\ 
a & a \\ 
\hline 
\end{array} \\
t, \profile \hspace{2cm} t, \profile \hspace{2.2cm} u, \profile'\hspace{1.7cm} u, \profile' \hspace{.2cm} \end{array} \]

In state $t$, after voter 1 informs voter 2 of her true preference $a \succ_1 c$, no uncertainty remains, and 1 and 2 commonly know that the profile is $\profile$. From Section \ref{sec.ex1} we further recall that equilibrium votes for $\profile$ are $(a,b),(b,b)$, and that conditional equilibria for the $t,u$ profile model have shape $(xy,b)$, where $x$ is voter 1's preference in $t$ and $y$ is voter 1's preference in $u$, and where all but $cc$ are equilibria. (See the payoff matrix in Figure \ref{fig.tuexample}.) 

We can now observe that all conditional equilibria are preserved after update. For example, given conditional profile $(bc,b)$, the updated conditional profile according to Def.~\ref{def.updated} is $(b,b)$. There is therefore no strategic incentive for voter 1 to inform voter 2 in state $t$.

On the other hand, in state $u$ voter 1 has an incentive to make her preference $c \succ_1 a$ known to 2. In the model with $\profile$ and $\profile'$, there is no equilibrium wherein 2 votes $c$. But after 1 informs 2 that her preferences are the same as his preferences, $(c,c)$ is an equilibrium. Most conditional equilibria of the two-state model are preserved, but not those where 1 votes $c$. For example, $(ac,b)$ was an equilibrium, but the updated profile $(c,b)$ is not an equilibrium.

Voter 1 has a strategic interest to reveal her preferences to voter 2 in state $u$ because subsequently she expects the outcome of the vote to be better. Before the update she expected $b$ to win, after update she expects $c$ to win, and $c \succ_1 b$. This demonstrates that: \begin{quote} {\em When there is uncertainty about votes, voters have different ways of acting strategically: voting strategically or strategically revealing voting preferences}. \end{quote}
\end{example}

\section{A logic of knowledge and voting} \label{sec.logic}


Given the set $\Agents$ of agents, the set of profiles $O(\Candidates)^\Agents$, that serve as {\em propositional variables}, and a voting rule $F$, we now define a logical language and semantics.
\begin{definition}[Logical language] \label{def.language}
The language $\lang$ is defined as
\[ \phi \ ::= \ {\profile} \mid \neg \phi \mid \phi \wedge \phi \mid K_i\phi \mid [\phi]\phi \]
where $i \in\Agents$ and ${\profile} \in O(\Candidates)^\Agents$. \end{definition}
An element of the language is a {\em formula}, and $\phi$ is a {\em formula variable}; $K_i \phi$ stands for `voter $i$ knows that $\phi$'; $[\phi]\psi$ stands for `after (public) announcement of $\phi$, $\psi$ (is true)'.

\begin{definition}[Semantics]
Let $\pmodel_s$ be a knowledge profile, where $\pmodel = (\States, \sim,\pi)$.The interpretation of formulas in a knowledge profile is defined as follows:
\[ \begin{array}{lcl}
\pmodel_s \models {\profile} & \text{ iff } \ & \pi(s) = {\profile} \\
\pmodel_s \models \neg\phi & \text{ iff } \ & \pmodel_s \not\models \phi \\
\pmodel_s \models \phi \wedge \psi & \text{ iff } \ & \pmodel_s \models \phi \text{ and } \pmodel_s \models \psi \\
\pmodel_s \models K_i \phi & \text{ iff } \ & \text{for every } t \text{ such that } s \sim_i t, \pmodel_t \models \phi \\
\pmodel_s \models [\phi]\psi & \text{ iff } \ & \pmodel_s \models \phi \text{ implies } (\pmodel|\phi)_s \models \psi
\end{array} \]
\end{definition}
where $\pmodel_s \not\models \phi$ stands for `not ($\pmodel_s \models \phi$)', and where $\pmodel|\phi = (S',\sim',\pi')$ such that $S' =
\{t \in S \mid \pmodel_t \models \phi\}$, $\sim_i'\ =\ \sim_i \cap \ {(S'\times S')}$, and for all $s \in S'$, $\pi'(s) = \pi(s)$. If $\pmodel_s \models \phi$ for all $s \in S$, we write $\pmodel \models \phi$ ($\phi$ is {\em valid on $\pmodel$}) and if this is the case for all $\pmodel$, we say that $\phi$ is {\em valid}, and we write $\models \phi$. \emph{The logic of knowledge and voting} is the set of validities for the class of profile models.

Although the voting function $F$ is not used in the syntax or in the semantics, it will later be used in the logical abbreviations introduced to formalize concepts such as conditional equilibrium.

Profile models are standard Kripke models, but with valuations that satisfy special conditions. We now present principles that are valid on the class of profile models, and that will feature as axioms in the proof system. Let $\overline{\Vel}$ denote exclusive disjunction.
\[ \begin{array}{llll}
\mathbf{P} &:& \overline{\Vel} \, O(\Candidates)^\Agents  \\
\mathbf{N} &:& \Et_{i \in \Agents} \Et_{\vote_i \in O(\Candidates)} ({\vote_i} \imp K_i {\vote_i})
\end{array}\]
Axiom $\mathbf{P}$ spells out that only one profile is true in each state. 
Axiom $\mathbf{N}$ says that voters know their own preferences. \begin{proposition} \label{prop.char}
The axioms $\mathbf{P}$
 and $\mathbf{N}$ are valid on the class of profile models.
\end{proposition}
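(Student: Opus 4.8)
The plan is to verify each axiom directly against the semantics, since validity on the class of profile models means truth at every pointed model $\pmodel_s$. Before doing so I would fix the reading of the formula $\vote_i$ appearing in $\mathbf{N}$: as indicated in Section~\ref{sec.knpro}, it abbreviates the proposition ``the preference of voter $i$ is $\vote_i$'', i.e., the disjunction $\Vel \{ {\profile} \mid {\profile} \in O(\Candidates)^\Agents \text{ and } {\profile}_i = {\vote_i} \}$ of all profile atoms whose $i$-th component equals $\vote_i$. Unfolding the atomic clause of the semantics then yields $\pmodel_s \models {\vote_i}$ iff $\pi(s)_i = {\vote_i}$, and this is the only fact about the abbreviation I need.

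For axiom $\mathbf{P}$ I would argue from $\pi$ being a function: $\pi(s)$ is a single well-defined element of $O(\Candidates)^\Agents$, and by the clause $\pmodel_s \models {\profile}$ iff $\pi(s) = {\profile}$, the atom $\pi(s)$ is the unique profile true at $s$ while every other profile atom is false at $s$. Hence exactly one disjunct of $\overline{\Vel}\, O(\Candidates)^\Agents$ holds, which is precisely the intended reading of the exclusive disjunction (that only one profile is true in each state). As $\pmodel_s$ was arbitrary, $\mathbf{P}$ is valid.

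For axiom $\mathbf{N}$ it suffices to show that each conjunct ${\vote_i} \imp K_i {\vote_i}$ is valid and then conjoin over all $i \in \Agents$ and all ${\vote_i} \in O(\Candidates)$. Fixing $\pmodel_s$ and assuming $\pmodel_s \models {\vote_i}$, i.e.\ $\pi(s)_i = {\vote_i}$, I take an arbitrary $t$ with $s \sim_i t$ and must show $\pi(t)_i = {\vote_i}$. This is exactly where the defining constraint on profile models enters: the valuation satisfies ``$s \sim_i t$ implies $\pi(s)_i = \pi(t)_i$'', so $\pi(t)_i = \pi(s)_i = {\vote_i}$ and hence $\pmodel_t \models {\vote_i}$. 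Since $t$ was an arbitrary element of $[s]_{\sim_i}$, this gives $\pmodel_s \models K_i {\vote_i}$.

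There is no real obstacle here: both axioms fall out of the two structural facts that $\pi$ is a function (giving $\mathbf{P}$) and that $\pi$ respects $\sim_i$ on the $i$-th coordinate (giving $\mathbf{N}$). The only point needing care is the bookkeeping of $\vote_i$ as a disjunction of profile atoms; I would also note that the equivalence-relation properties of $\sim_i$ are not used beyond the coordinate constraint already imposed by the definition of a profile model.
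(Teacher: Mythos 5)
Your proof is correct and follows exactly the paper's own argument: the paper likewise derives $\mathbf{P}$ from the fact that the valuation $\pi$ is a function from states to profiles, and $\mathbf{N}$ from the constraint that $s \sim_i t$ implies $\pi(s)_i = \pi(t)_i$. Your version merely spells out the bookkeeping (unfolding $\vote_i$ as a disjunction of profile atoms and checking the $K_i$ clause) that the paper leaves implicit.
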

\begin{proof}
Axiom $\mathbf{P}$ is valid as the valuation $\pi$ is a function from states to profiles. Axiom $\mathbf{N}$ formalizes the constraint on knowledge profiles that $s \sim_i t$ implies $\pi(s)_i = \pi(t)_i$.
\end{proof}


\begin{proposition} \label{prop.ax}
The logic of knowledge and voting has a complete axiomatization. 
\end{proposition}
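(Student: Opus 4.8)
The plan is to establish a sound and complete axiomatization by combining the standard machinery for public announcement logic over $S5$ with the two frame-characterizing axioms $\mathbf{P}$ and $\mathbf{N}$ already shown valid in Proposition \ref{prop.char}. First I would fix the proof system: take the axioms and rules of multi-agent epistemic logic $S5_n$ (propositional tautologies, the distribution axiom $K_i(\phi\imp\psi)\imp(K_i\phi\imp K_i\psi)$, truth $K_i\phi\imp\phi$, positive and negative introspection, together with modus ponens and necessitation for each $K_i$), add the standard reduction axioms for public announcement that push $[\phi]$ inward past each connective (namely $[\phi]p \Eq (\phi\imp p)$ for atomic $p$, and the analogous equivalences for negation, conjunction, knowledge $[\phi]K_i\psi \Eq (\phi\imp K_i[\phi]\psi)$, and iterated announcements), and finally adjoin the two special axioms $\mathbf{P}$ and $\mathbf{N}$. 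Soundness is then immediate: the $S5_n$ and reduction axioms are sound on all Kripke models by the classical results of \citet{plaza:1989} and \citet{hvdetal.del:2007}, and $\mathbf{P}$, $\mathbf{N}$ are sound on the class of profile models by Proposition \ref{prop.char}.

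For completeness the key observation is that the announcement modality can be eliminated. Using the reduction axioms as rewrite rules, every formula of $\lang$ is provably equivalent to an announcement-free formula, i.e.\ a formula in the purely epistemic fragment built from the profile-atoms using Boolean connectives and the $K_i$ operators. Hence it suffices to prove completeness for this static fragment, after which completeness for the full language $\lang$ follows because a valid formula is provably equivalent to a valid announcement-free formula, which is then provable. I would carry out the standard canonical-model construction for $S5_n$ extended with $\mathbf{P}$ and $\mathbf{N}$: take maximal consistent sets as worlds, define $\sim_i$ via the usual $K_i$-accessibility, and define the valuation by reading off from each world the unique profile forced by axiom $\mathbf{P}$.

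The main obstacle, and the only step that is not wholly routine, is verifying that the canonical model is genuinely a \emph{profile} model rather than an arbitrary $S5_n$ model. Concretely I must check two things. First, axiom $\mathbf{P}$, which is the provable exclusive disjunction over all of $O(\Candidates)^\Agents$, guarantees that each maximal consistent set contains exactly one profile-atom, so the canonical valuation $\pi$ is a well-defined total function from states to profiles. Second, axiom $\mathbf{N}$ guarantees the defining semantic constraint of profile models, that $s\sim_i t$ implies $\pi(s)_i=\pi(t)_i$: if state $s$ forces profile $\profile$, then by $\mathbf{N}$ it forces $K_i\vote_i$, so every $\sim_i$-accessible $t$ forces $\vote_i$, whence $t$'s unique profile agrees with $\profile$ on voter $i$'s preference. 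With these two points in hand the canonical model satisfies exactly the conditions of the definition of a profile model, the Truth Lemma goes through by the usual induction (only the atomic and $K_i$ cases matter, since the language is announcement-free at this stage), and completeness for the static fragment follows; lifting to all of $\lang$ via the reduction axioms completes the argument.
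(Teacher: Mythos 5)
Your proposal is correct and follows essentially the same route as the paper's own proof: the standard public announcement logic axiomatization extended with $\mathbf{P}$ and $\mathbf{N}$, soundness via Proposition \ref{prop.char}, and completeness by using the reduction axioms to eliminate announcements and then observing that the canonical model of the static fragment is a profile model. The only difference is one of detail: the paper compresses into a single sentence the verification (which you spell out) that $\mathbf{P}$ makes the canonical valuation a well-defined function into profiles and that $\mathbf{N}$ enforces the constraint that $s \sim_i t$ implies $\pi(s)_i = \pi(t)_i$.
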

\begin{proof}
As the axiomatization of the logic of knowledge and voting we propose the axiomatization of public announcement logic, that is standard and that we assume to be known \citep{plaza:1989}, and to which we add the axioms $\mathbf{P}$ and $\mathbf{N}$. This axiomatization is with respect to the class of profile models (see the definition of the logic, above). For the soundness we refer to the soundness of public announcement logic and the validity of $\mathbf{P}$ and $\mathbf{N}$ (Prop.~\ref{prop.char}). For the completeness we observe that the canonical model to determine the completeness of the logic without public announcements is a profile model (Prop.~\ref{prop.char}, again), and that the completeness of the logic with announcements is as usual (see \cite{hvdetal.del:2007}) obtained because every formula is equivalent to one without announcements (the axioms are rewriting rules, pushing all logical connectives beyond announcements, such as $[\phi](\psi\et\chi) \eq ([\phi]\psi\et[\phi]\chi)$).
\end{proof}


We proceed to formalize the concepts involving knowledge and voting. The main definition of interest is that of conditional equilibrium. We start with some abbreviations.\footnote{In the 4th abbreviation, note that the $\profile''$ such that $(\profile_{-i},\vote'_i) = {\profile''}$ is unique, and that $\Vel \{ {\profile''}\} = {\profile''}$.}
%
%
\[ \begin{array}{llll}
\vote_i & := & \Vel \{ {\profile'}\in O(\Candidates)^\Agents \mid {\vote_i} = {\vote'_i} \} \\ 
a \succ_i b & := & \Vel \{ {\profile'}\in O(\Candidates)^\Agents \mid a \vote'_i b  \} \\
a & := & \Vel \{ {\profile}\in O(\Candidates)^\Agents \mid F(\profile) = a \} \\
(\profile_{-i},\vote'_i) & := & \Vel \{ {\profile''}\in O(\Candidates)^\Agents \mid (\profile_{-i},\vote'_i) = {\profile''} \} \\
F(\profile') \succ_i F(\profile'') & := & (({\profile'} \imp a) \et ({\profile''} \imp b)) \imp (a \succ_i b) \\
\end{array} \]
We emphasize the role of the background parameter $F$ in these abbreviations. Firstly, the `formula' $a$ above (third item) stands for `candidate $a$ wins the election'. The definiens uses $F$. Formula $a$ is therefore true in a state $s$ with profile $\pi(s)$ such that $F(\pi(s)) = a$. Secondly, the use of $F$ in the definiendum of the final abbreviation (fifth item), $F(\profile') \succ_i F(\profile'')$, is therefore proper, because the definiens refers to these winners, namely in ${\profile'} \imp a$ and ${\profile''} \imp b$: that $a$ and $b$ are abbreviations of formulas that use $F$.

\begin{example}
We demonstrate the logic reusing Example \ref{ex.one} about Leela and Sunil. 
\begin{itemize}
\item Leela prefers Alice over Cars: $\pmodel_t \models a \succ_1 c$
\item Sunil does not know that Leela prefers Alice over Cars: $\pmodel_t \models \neg K_2 (a \succ_1 c)$
\item Leela knows Sunil's preference, but Leela is uncertain whether Sunil knows her preference: $\pmodel_t \models K_1 {\vote_2} \et \neg (K_1 K_2 {\vote_1} \vel K_1 \neg K_2 {\vote_1})$
\item Sunil does not know that Leela prefers Alice over Cars, but after he was told so, he knows it: $\pmodel_t \models \neg K_2 (a \succ_1 c) \et [a \succ_1 c] K_2 (a \succ_1 c)$
\end{itemize}
\end{example}

\noindent Using the above abbreviations and trivial ones like $a \succeq_i b := \neg (b \succ_i a)$ we now have that: 

\begin{itemize}
\item Voter $i$ has a manipulation of profile $\profile$: \\ $\Vel_{\vote_i'\in O(\Candidates)} (F(\profile_{-i},\vote'_i) \succ_i F(\profile))$
\item Voter $i$ has a manipulation $\vote'_i$ in profile $\profile$: \\ $F(\profile_{-i},\vote'_i) \succ_i F(\profile)$
\item Voter $i$ has a dominant manipulation $\vote'_i$ in profile $\profile$: \\ $\Et_{{\profile''}\in O(\Candidates)^\Agents} (F(\profile''_{-i},\vote'_i) \succeq_i F(\profile'')) \ \et \ \Vel_{{\profile''}\in O(\Candidates)^\Agents} (F(\profile''_{-i},\vote'_i) \succ_i F(\profile''))$
\item Voter $i$ with preference $\vote_i$ knows de dicto that she has a manipulation:\\  
$K_i \Vel_{{\vote'_i}\in O(\Candidates)} ({\profile''} \imp (F(\profile''_{-i},\vote'_i) \succ_i F(\profile'')))$
\item Voter $i$ with preference $\vote_i$ knows de re that she has a manipulation:\\ 
$\Vel_{{\vote'_i}\in O(\Candidates)} K_i ({\profile''} \imp (F(\profile''_{-i},\vote'_i) \succ_i F(\profile'')))$
\item Profile $\profile$ is an equilibrium profile: \\ $ \Et_{i \in \Agents} \Et_{\vote'_i\in O(\Candidates)} (F(\profile) \succeq_i F(\profile_{-i},\vote'_i))$
\item A conditional equilibrium (where $\mathcal{J}$ is defined below): \[ \Et_\mathcal{J} \left( (\Et_{i \in\Agents} \phi_i^{j(i)}) \imp  \Et_{k \in\Agents} \Et_{\vote'_k\in O(\Candidates)} \left( F(\Et_{i \in\Agents} \vote_i^{j(i)}) \ \ \succeq_k^{j(i)} \ \ F( \ (\Et_{i \in\Agents} \vote_i^{j(i)})_{-k},\vote'_k \ ) \right) \right) \]
\end{itemize} 

In the formalization of knowledge {\em de re} and {\em de dicto}, condition `${\profile''} \imp$' guarantees that only profiles $\profile''$ in the information set $\pi([t]_{\sim_i})$ of a knowledge profile $\pmodel_t$ are selected. 

The index set $\mathcal{J}$ in the definition of conditional equilibrium ranges over {\em distinguishing formulas} $\phi_i^{j(i)}$ for all ($n$) voters and for all information sets of those voters, i.e., $\mathcal{J} := \{ (j(1),\dots,j(n)) \mid 1 \leq j(1) \leq \max(1), \dots, 1 \leq j(n) \leq \max(n) \}$ where each voter $i$ has $\max(i)$ information sets; and where for condition $\phi_i^{j(i)}$ voter $i$ has preference $\vote_i^{j(i)}$. A distinguishing formula is only true in that information set of the voter and else false. As the domain is finite, such distinguishing formulas exist. The distinguishing formulas {\em cover} the profile model, i.e., let the information sets for a voter in a given profile model be numbered $1,\dots,\max$, then there are formulas $\phi^1,\dots,\phi^{\max}$ such that $\Vel_{j=1\dots \max} \phi^j$ and $\Et_{j=1\dots \max} (\phi^j \imp \neg\phi^{j+1})$ (where $\max+1 = 1$) are valid on the profile model. The formalization of conditional equilibrium then simply spells out the equilibrium for the Bayesian game with virtual agents (`virtual voters') $(i,[s]_{\sim_i})$ instead of voters $i$. Note that $\Et_{i \in\Agents} \vote_i^{j(i)}$ represents a profile: a validity of the logic is that ${\profile} \eq \Et_{i \in \Agents} \vote_i$.


\section{Conclusion and further research} \label{sec.concl}

We presented a logic for the interaction of voting and knowledge. The semantic primitive is the knowledge profile: a profile including uncertainty of voters about  the profile. We defined de re knowledge of manipulation and de dicto knowledge of manipulation, and the notion of conditional equilibrium for risk-averse voters. We modelled the dynamics of knowledge, such as voters revealing preferences, and its effects on knowledge of manipulation and on conditional equilibrium, where we proved that knowledge of manipulation is preserved after such updates whereas a conditional equilibrium may not be preserved. Finally, we formalized the system in a dynamic epistemic logic.

\medskip

Our goal was to present a \emph{minimally} meaningful extension of works on uncertainty in voting, namely that permits formalizing higher-order uncertainty and updates of uncertainty. Many further extensions are conceivable.  

Additionally to uncertainty over the preferences of other voters one can consider uncertainty over the voting function. Instead of profile models consisting of states with a profile for each such state, we would then need `voting models' consisting of states with a profile and a voting function for each state. Notions like pessimistic manipulation, conditional preference, and conditional equilibrium generalize straightforwardly to this setting. The logic would need an additional axiom to describe that only one voting function can be associated with a given state.

Apart from voters $1,\dots,n$ it is convenient to distinguish a designated additional agent 0 who is the {\em central authority}, or {\em chair}. The universal relation on a knowledge profile model can then be seen as the indistinguishability relation of that agent 0. This opens the door to the logical modelling of well-studied problems in computational social choice, such as control by the chair, or determining possible winners. 

Our results are for any amount of voters but all our examples were for at most three voters. This was for the purpose of presentation. Still, in realistic settings the power of individual voters is very limited. However, our results seem also to meaningfully apply to (few) \emph{coalitions} of voters. In voting theory, the power of a coalition means the power of a set of agents that can decide on a joint action as a result of communication between them. Under conditions of uncertainty about profiles this is therefore a function of the \emph{distributed knowledge} of that coalition. In a knowledge profile, the indistinguishability relation for a coalition is the intersection of the indistinguishability relations for the individuals in the coalition.

We modelled {\em knowledge} of preferences, not {\em belief}. Unlike knowledge, beliefs may be incorrect. Somewhat similarly, unlike truthful announcements, insincere announcements (e.g., lying about your true preferences in a voting poll) may lead to false beliefs. Conceptually, the interaction between belief and voting is much more complex than between knowledge and voting. Technically, there may be fewer issues, for example, the same logic as in Section \ref{sec.logic} can be used with minor adjustments.

\weg{
\paragraph*{\bf Declaring votes by variable assignments}

Another form of dynamics than that of revealing voters' preferences is the dynamics of {\em declaring votes}. Just as there may be uncertainty about truthful votes, there may also be uncertainty about declared votes. This is relevant for the investigation of {\em safe manipulation} \citep{SlinkoWhite08}, where the manipulating voter announces her vote to a (presumably large) set of voters sharing her preferences but is unsure of how many will follow her, and also in Stackelberg voting games, wherein voters declare their votes in sequence, following a fixed, exogeneously defined order.

Revealing preferences is informational (`purely epistemic') change, whereas declaring votes is ontic/factual change. A dynamic epistemic logic equivalent to model a declared vote is as a so-called {\em public assignment} \citep{hvdetal.aamas:2005,jfaketal.lcc:2006}. A succinct way to expand our framework with uncertainty about declared votes is to add a duplicate set of propositional variables for voter preferences, to represent their declared votes. Initially setting all these variables to false, declaring a vote then becomes an assignment setting such a variable to true (whereas all other possible votes, for that voter, remain false).

\paragraph*{\bf Central authority}

Apart from the $n$ voters, it is convenient to distinguish yet another agent: a designated agent who is the {\em central authority}, or {\em chair}. This opens the door to the logical modelling of well-studied problems in computational social choice, such as control by the chair, or determining possible winners. In this work we did not explicitly model the chair because her role is uniform throughout a knowledge profile model and because we assume that there is no uncertainty on what the voting rule (and thus the tie-breaking preferences) is, we only considered uncertainty about preferences of other voters. So in that sense it is exogenous.

The chair as designated agent can be called agent 0 (thus distinguishing the chair from the voters $1,\dots,n$). The universal relation on a knowledge profile model can then be seen as the indistinguishability relation of the agent 0, the central authority. On a connected model (i.e., when there is always a path between any two states in the model) this is the same as common knowledge of the uncertainty of the voters. The computational tasks of the central authority, such as determining the possible winners or other form of control, may well be harder on knowledge profiles as it has to take uncertainty into account. Identifying the central authority with an agent with universal access allows us  to determine whether it is harder. 

\paragraph*{\bf Coalitional manipulation}

Coalitions play a big role in voting, because in realistic settings the power of individual voters is very limited. 
Coalitional notions also play an important role in epistemic logic. Two notions useful in our setting are common knowledge and distributed knowledge. Given a knowledge profile, a proposition is commonly known if it is true in all states reachable (from the actual state of the knowledge profile) by arbitrarily long finite paths in the model (reflexive transitive closure of access for all voters in the coalition). With the interpretation of common knowledge of coalition $G$ we can thus associate an equivalence relation $\sim_G$, defined as $(\Union_{i \in G} \sim_i)^*$. A proposition is distributedly known in a knowledge profile, if it is true in the intersection of accessibility relations in the actual state---the relation $\Inter_{i \in G} \sim_i$.

By analogy, just as the vote of an individual agent depends on her knowledge, the vote of a coalition would seem to depend on the common knowledge of that coalition. But that seems wrong. In voting theory, the power of a coalition means the power of a set of agents that can decide on a joint action {\em as a result of communication between them}. Communication makes the uncertainty about each others' profiles disappear. In terms of knowledge profiles, this means that we are talking about another model, namely the model where for all agents $i \in G$, $\sim_i$ is refined to $\Inter_{i \in G} \sim_i$. What determines the voting power of a coalition is not common knowledge of that coalition but distributed knowledge of that coalition, and involves an update of the knowledge profile model. That is possible, but makes for an unlucky marriage of modelling constraints. A more suitable restriction seems only to consider coalitions of voters having the same uncertainty (i.e., ${\Inter_{i \in G} \sim_i} = {\sim_i}$ for all $i \in G$). Having the same uncertainty determines a {\em type} of voter. That makes sense in voting and is a common modelling constraint: we only consider coalitions of the same {\em type}. One can then define knowledge of manipulation and conditional equilibria for coalitions. 


\paragraph*{\bf Knowledge and belief}
We modelled {\em knowledge} of preferences. We did not model {\em belief} of preferences. Unlike knowledge, beliefs may be incorrect. Somewhat similarly, unlike truthful announcements, insincere announcements (e.g., lying about your true preferences in a voting poll) may lead to false beliefs. 

Consider the following variant wherein two voters are uncertain between profiles $\profile$ and $\profile'$: voter 1 knows which of $\profile$ and $\profile'$ is the case, but voter 2 believes (incorrectly) that $\profile$ is the case.
\[\begin{array}{l}
 \begin{array}{|c|c|} \hline 1 & 2 \\ \hline a & c\\ b & b \\ c & a\\ \hline \end{array}\longleftarrow2\,\mbox{------}\begin{array}{|c|c|} 
	\hline 
	1 & 2 \\ 
	\hline 
 c & c \\ 
	b & b \\ 
	a & a \\ 
	\hline 
	\end{array}
\\[1cm] 
t, \profile \hspace{2cm} u, \profile'
\end{array}\]
In state $u$ of this knowledge profile, voter 2 will now not vote $c$, because he believes that voter 1 prefers $a$, to which $b$ is the best response (give plurality voting, and tie $b \succ a \succ c$). Therefore, he will, again, vote $b$. Unlike before, he will do that even if he is not risk-avoiding. 

Changing from knowledge to belief allows for truly counterintuitive scenarios, such as agents believing their preferences to be different from what they really are. For example, above, swap the votes of $1$ and $2$ in $\profile$; i.e., $c \succ_1 b \succ_1 a$ and $a \succ_2 b \succ_2 c$. We now have that if voter $2$ really prefers $c$, then he believes that he prefers $a$. 

Clearly, the interaction between belief and voting is more complex than between knowledge and voting. Technically there are few issues, the same logical framework as in Section \ref{sec.logic} can be used with minor adjustments.

\paragraph*{\bf Applications}
The logical setting defined in the paper allows us to represent various classes of situations already studied specifically in (computational) social choice, thus offering a general representation framework in which, of course, new classes of problems will be representable as well, thus providing an homogeneous, unified representation framework. To represent such classes of problems we need the extensions of the framework that were discussed above: uncertainty for coalitions, explicit modelling of the chair, and assignments (to represent declaring votes) instead of merely announcements (to represent revealing preference). As a final example, we could mention the issue of  
%
{\em possible and necessary winners} \citep{KonczakLang05}: 

Let there be one more agent (the chair), who has incomplete knowledge of the votes. Voter $x$ is a possible winner if the chair does not know that $x$ is not a (co)winner, and a necessary winner if the chair knows that $x$ is a (co)winner. Describing such knowledge and ignorance and its strategic consequences would be a typical application of our framework.

}


\paragraph*{\bf Acknowledgements}

This work is based on Chapter 2 of the Ph.D.\ thesis of first author \citet{bakhtiari:2017}. It subsumes earlier work presented at AAMAS 2012 and TARK 2013  \citep{hvdetal.TARKvote:2013}.\footnote{The following results originate in the TARK version: Proposition 3 of TARK is essentially the same as our Proposition \ref{proprop1}, and Proposition 4 of TARK is exactly the same as our Proposition \ref{proprop}. Note that the formalization of voting concepts in Definition 13 of TARK is very different from the formalization in our Section \ref{sec.logic}, both in form and in meaning: the propositional letters in the TARK logic stand for `agent $i$ prefers candidate $a$ over candidate $b$', whereas our proposition letters stand for profiles.} We thank anonymous conference reviewers and anonymous journal reviewers, and the members of Zeinab Bakhtiari's Ph.D.\ committee, for their comments. We are indebted to J\'er\^ome Lang, CNRS --- Universit\'e Paris Dauphine, for his unfailing support of and interest in our work, and to Helle Hansen, Delft University of Technology, for her detailed proofreading and utterly precise corrections. Hans van Ditmarsch is also affiliated to IMSc, Chennai, India. We acknowledge support from ERC project EPS 313360.

\bibliographystyle{natbib}
\bibliography{vote,biblio2019,biblio2019b}

\providecommand{\noopsort}[1]{}\providecommand{\noopsort}[1]{}
\begin{thebibliography}{}

\bibitem[{\AA}gotnes and van Ditmarsch(2011){\AA}gotnes and van
  Ditmarsch]{agotnesetal:2011}
{\AA}gotnes, T. and van Ditmarsch, H. (2011).
\newblock What will they say? - {P}ublic announcement games.
\newblock {\em Synthese\/}, {\bf 179(S.1)}, 57--85.

\bibitem[{\AA}gotnes {\em et~al.}(2006){\AA}gotnes, van~der Hoek, and
  Wooldridge]{agotnesetal:2006}
{\AA}gotnes, T., van~der Hoek, W., and Wooldridge, M. (2006).
\newblock Towards a logic of social welfare.
\newblock Proc.\ of 7th LOFT.

\bibitem[{\AA}gotnes {\em et~al.}(2011){\AA}gotnes, van~der Hoek, and
  Wooldridge]{agotnesetalJAAMAS:2011}
{\AA}gotnes, T., van~der Hoek, W., and Wooldridge, M. (2011).
\newblock On the logic of preference and judgment aggregation.
\newblock {\em Autonomous Agents and Multi-Agent Systems\/}, {\bf 22}, 4--30.

\bibitem[Bakhtiari(2017)Bakhtiari]{bakhtiari:2017}
Bakhtiari, Z. (2017).
\newblock {\em The Dynamics of Incomplete and Inconsistent Information:
  Applications of logic, algebra and coalgebra\/}.
\newblock Ph.D. thesis, Universit\'e de Lorraine.
\newblock \url{https://tel.archives-ouvertes.fr/tel-01680340}.

\bibitem[Barber\`a {\em et~al.}(1998)Barber\`a, Bogomolnaia, and van~der
  Stel]{BarberaBogomolnaiaStel98}
Barber\`a, S., Bogomolnaia, A., and van~der Stel, H. (1998).
\newblock Strategy-proof probabilistic rules for expected utility maximizers.
\newblock {\em Mathematical Social Sciences\/}, {\bf 35}(2), 89--103.

\bibitem[Bartholdi {\em et~al.}(1989)Bartholdi, Tovey, and
  Trick]{BartholdiToveyTrick89b}
Bartholdi, J., Tovey, C., and Trick, M. (1989).
\newblock The computational difficulty of manipulating an election.
\newblock {\em Social Choice and Welfare\/}, {\bf 6}(3), 227--241.

\bibitem[Bergemann and Morris(2005)Bergemann and Morris]{bergemannetal:2005}
Bergemann, D. and Morris, S. (2005).
\newblock Robust mechanism design.
\newblock {\em Econometrica\/}, {\bf 73(6)}, 1771--1813.

\bibitem[Boutilier and Rosenschein(2016)Boutilier and
  Rosenschein]{BoutilierRosenscheinChapter}
Boutilier, C. and Rosenschein, J. (2016).
\newblock Incomplete information and communication in voting.
\newblock In F.~Brandt, V.~Conitzer, U.~Endriss, J.~Lang, and A.~D. Procaccia,
  editors, {\em Handbook of Computational Social Choice\/}, chapter~10.
  Cambridge University Press.

\bibitem[Chopra {\em et~al.}(2004)Chopra, Pacuit, and Parikh]{chopraetal:2004}
Chopra, S., Pacuit, E., and Parikh, R. (2004).
\newblock Knowledge-theoretic properties of strategic voting.
\newblock In {\em Proc.\ of 9th JELIA\/}, LNCS 3229, pages 18--30.

\bibitem[Cin\'a and Endriss(2016)Cin\'a and Endriss]{cinaetal:2016}
Cin\'a, G. and Endriss, U. (2016).
\newblock Proving classical theorems of social choice theory in modal logic.
\newblock {\em Autonomous Agents and Multi-Agent Systems\/}, {\bf 30(5)},
  963--989.

\bibitem[Conitzer {\em et~al.}(2011)Conitzer, Walsh, and
  Xia]{conitzeretal.aaai:2011}
Conitzer, V., Walsh, T., and Xia, L. (2011).
\newblock Dominating manipulations in voting with partial information.
\newblock In {\em Proc.\ of AAAI\/}.

\bibitem[Duggan and Schwartz(2000)Duggan and Schwartz]{DugganSchwartz00}
Duggan, J. and Schwartz, T. (2000).
\newblock Strategic manipulability without resoluteness or shared beliefs:
  Gibbard-{S}atterthwaite generalized.
\newblock {\em Social Choice and Welfare\/}, {\bf 17}(1), 85--93.

\bibitem[Fagin {\em et~al.}(1995)Fagin, Halpern, Moses, and
  Vardi]{faginetal:1995}
Fagin, R., Halpern, J., Moses, Y., and Vardi, M. (1995).
\newblock {\em Reasoning about Knowledge\/}.
\newblock MIT Press.

\bibitem[Gibbard(1973)Gibbard]{Gibbard73}
Gibbard, A. (1973).
\newblock Manipulation of voting schemes: A general result.
\newblock {\em Econometrica\/}, {\bf 41}, 587--601.

\bibitem[Harsanyi(1968)Harsanyi]{harsanyi:1967}
Harsanyi, J. (1967--1968).
\newblock Games with incomplete information played by `{B}ayesian' players,
  {P}arts {I}, {II}, and {III}.
\newblock {\em Management Science\/}, {\bf 14}, 159--182, 320--334, 486--502.

\bibitem[Hazon {\em et~al.}(2008)Hazon, Aumann, Kraus, and Wooldridge]{HAKW08}
Hazon, N., Aumann, Y., Kraus, S., and Wooldridge, M. (2008).
\newblock Evaluation of election outcomes under uncertainty.
\newblock In {\em Proc.\ of AAMAS '08\/}, pages 959--966.

\bibitem[Jamroga and van~der Hoek(2004)Jamroga and van~der Hoek]{Jamroga2004}
Jamroga, W. and van~der Hoek, W. (2004).
\newblock Agents that know how to play.
\newblock {\em Fundamenta Informaticae\/}, {\bf 63}, 185--219.

\bibitem[Konczak and Lang(2005)Konczak and Lang]{KonczakLang05}
Konczak, K. and Lang, J. (2005).
\newblock Voting procedures with incomplete preferences.
\newblock In {\em Proc. IJCAI Multi\-disciplinary Workshop on Advances in
  Preference Handling\/}.

\bibitem[Lomuscio(1999)Lomuscio]{lomuscio:1999}
Lomuscio, A. (1999).
\newblock {\em Knowledge Sharing among Ideal Agents\/}.
\newblock Ph.D. thesis, University of Birmingham, Birmingham, UK.

\bibitem[Meir(2015)Meir]{meir:2015}
Meir, R. (2015).
\newblock Plurality voting under uncertainty.
\newblock In {\em Proc.\ of the 29th {IJCAI}\/}, pages 2103--2109.

\bibitem[Parikh {\em et~al.}(2013)Parikh, Tasdemir, and
  Witzel]{parikhetal:2013}
Parikh, R., Tasdemir, C., and Witzel, A. (2013).
\newblock The power of knowledge in games.
\newblock {\em IGTR\/}, {\bf 15(4)}.

\bibitem[Perkov(2016)Perkov]{perkov:2016}
Perkov, T. (2016).
\newblock Natural deduction for modal logic of judgment aggregation.
\newblock {\em Journal of Logic, Language and Information\/}, {\bf 25(3-4)},
  335--354.

\bibitem[Plaza(1989)Plaza]{plaza:1989}
Plaza, J. (1989).
\newblock Logics of public communications.
\newblock In {\em Proc.\ of the 4th ISMIS\/}, pages 201--216. Oak Ridge
  National Laboratory.

\bibitem[Reijngoud and Endriss(2012)Reijngoud and Endriss]{ReijngoudE12}
Reijngoud, A. and Endriss, U. (2012).
\newblock Voter response to iterated poll information.
\newblock In W.~van~der Hoek, L.~Padgham, V.~Conitzer, and M.~Winikoff,
  editors, {\em Proc.\ of {AAMAS}\/}, pages 635--644.

\bibitem[Satterthwaite(1975)Satterthwaite]{Satterthwaite75}
Satterthwaite, M.~A. (1975).
\newblock Strategy-proofness and {A}rrow's conditions: Existence and
  correspondence theorems for voting procedures and social welfare functions.
\newblock {\em Journal of Economic Theory\/}, {\bf 10}(2), 187--217.

\bibitem[Troquard {\em et~al.}(2011)Troquard, van~der {H}oek, and
  Wooldridge]{troquardetal:2011}
Troquard, N., van~der {H}oek, W., and Wooldridge, M. (2011).
\newblock Reasoning about social choice functions.
\newblock {\em Journal of Philosophical Logic\/}, {\bf 40(4)}, 473--498.

\bibitem[van Benthem(2001)van Benthem]{jfak.bulletin:2001}
van Benthem, J. (2001).
\newblock Games in dynamic epistemic logic.
\newblock {\em Bulletin of Economic Research\/}, {\bf 53(4)}, 219--248.

\bibitem[van Ditmarsch {\em et~al.}(2008)van Ditmarsch, van~der Hoek, and
  Kooi]{hvdetal.del:2007}
van Ditmarsch, H., van~der Hoek, W., and Kooi, B. (2008).
\newblock {\em Dynamic Epistemic Logic\/}, volume 337 of {\em Synthese
  Library\/}.
\newblock Springer.

\bibitem[van Ditmarsch {\em et~al.}(2013)van Ditmarsch, Lang, and
  Saffidine]{hvdetal.TARKvote:2013}
van Ditmarsch, H., Lang, J., and Saffidine, A. (2013).
\newblock Strategic voting and the logic of knowledge.
\newblock In {\em Proc.\ of 14th TARK -- Chennai\/}.
\newblock \url{http://arxiv.org/abs/1310.6436}.

\end{thebibliography}

\end{document}